\def\BibTeX{{\rm B\kern-.05em{\sc i\kern-.025em b}\kern-.08em
        T\kern-.1667em\lower.7ex\hbox{E}\kern-.125emX}}
\newcommand{\cov}[3]{\boldsymbol{k}_{#1,#2}^{(#3)}}
\newcommand{\x}{\boldsymbol{x}}
\newcommand{\qedsymbol}{$\blacksquare$}
\newcommand{\bm}{\boldsymbol}
\newcommand{\safeopt}{{\textsc{SafeOpt}}}
\newcommand{\safeslope}{{\textsc{SafeSlope}}}
\newcommand{\safeucb}{{\textsc{SafeUCB}}}
\newtheorem{theorem}{Theorem}[section]
\newtheorem{remark}{Remark}[section]
\newtheorem{corollary}{Corollary}[section]
\newtheorem{lemma}{Lemma}[section]
\begin{document}
    \title{\LARGE A Multi-Fidelity Bayesian Approach to Safe Controller Design\\
        \small Extended Version with Pseudocode, Extensions, and Full Proofs \thanks{This work was supported in part by ARO grant W911NF-18-1-0325 and in part by NSF Award CNS-2134076.}}

    \author{Ethan Lau \and Vaibhav Srivastava \and Shaunak D.~Bopardikar \thanks{The authors are with the Electrical and Computer Engineering Department at Michigan State University.}}

    \maketitle
    \bstctlcite{IEEEexample:BSTcontrol}

    \begin{abstract}
        Safely controlling unknown dynamical systems is one of the biggest challenges in the field of control systems. Oftentimes, an approximate model of a system's dynamics exists which provides beneficial information for control design. However, differences between the approximate and true systems present challenges as well as safety concerns. We propose an algorithm called \safeslope\ to safely evaluate points from a Gaussian process model of a function when its Lipschitz constant is unknown. We establish theoretical guarantees for the performance of \safeslope\ and quantify how multi-fidelity modeling improves the algorithm's performance. Finally, we present a case where \safeslope\ achieves lower cumulative regret than a naive sampling method by applying it to find the control gains of a linear time-invariant system.
    \end{abstract}

    \section{Introduction}
    In the realm of control systems, there exist many instances in which the dynamics are not fully modeled. While an approximation of the dynamics may exist, variations in the system's components or environment may cause the system to deviate from the design model. For example, consider off-the-shelf robotics kits. Though identically designed, each robot possesses variations that cause its performance to vary from the design model. In this case, we can consider each robot to be a \emph{black-box system}, possessing accessible input-output data but inaccessible exact dynamics. We study how the true system output can be used with a design or simulated model to create an improved model of the true dynamical system.

    Gaussian process (GP) regression is a popular non-parametric technique for optimizing unknown or difficult-to-evaluate cost functions. The upper confidence bound (UCB) algorithm \cite{srinivas2012information} guarantees asymptotic zero regret when iteratively sampling a GP. Multi-fidelity Gaussian processes (MF-GPs) predict a distribution from multiple correlated inputs.
    The linear auto-regressive (AR-1) model is an MF-GP that uses a cheaper model to assist in evaluating a more complex model \cite{kennedy2000predicting}. The AR-1 model's recursive structure allows it to effectively model correlated processes while its decoupled form enables computationally efficient parameter learning. Analytical guarantees have also been established when applying Bayesian optimization to MF-GPs \cite{kandasamy2019multi, song2019general}.

    Recently, GPs have been explored for control design.
    GPs and MF-GPs have been applied to finding ideal control gains for linear time-invariant (LTI) systems \cite{marco2017design, marco2017virtual}.
    MF-GPs have also been applied to falsification frameworks for testing system safety \cite{shahrooei2022falsification}.
    However, these papers primarily contain experimental results, without any mathematical guarantees for the approach.

    Other data-driven methods have been proposed to control LTI systems.
    Model-based approaches reconstruct a model of the system dynamics from trajectories of similar systems \cite{oymak2019non, xin2023learning} and have been studied for robustness \cite{zheng2021sample}.
    When data is abundant, model predictive control may be used to find an ideal control strategy \cite{hewing2020learning}. Model-free approaches aim to directly control a system without learning the system dynamics \cite{baggio2019data, de2019formulas, sun2021learning}.

    Whether model-based or model-free, a critical aspect of controller design is safety. A recent review of safe learning in control classifies approaches based on the strength of the safety guarantee and the required knowledge of the system's dynamics \cite{brunke2022safe}. An ideal approach ensures strict constraints are met for a system with unknown dynamics. Despite proposed solutions, there is a gap in work involving using GPs for safe control design.

    We consider a data-driven Bayesian optimization approach to find optimal controllers of black-box systems. The following are our \underline{main contributions}:

    1) We establish \safeslope, a safe exploration algorithm with analytical bounds when the Lipschitz constant of a black-box cost function is unknown. Unlike \safeopt\ \cite{sui2015safe}, which relies on a known Lipschitz constant, we upper bound the slope using the posterior distribution of the GP.

    2) We formalize how an AR-1 model can improve the choice of inputs. In particular, we show how its conditional covariance matrix can be used to reduce the upper bound on the information gain. We also numerically compare the performance of an AR-1 model to a single-fidelity GP.

    \section{Problem Overview}
    \subsection{Motivating Scenario}\label{sec:scenario}
    For this problem, we model a true system with LTI dynamics, $\bm{z}_{j+1} = A\bm{z}_{j} + B\bm{u}_{j}$,
    where $\bm{z}\in\mathbb{R}^n$ is the state, $\bm{u}\in\mathbb{R}^{p}$ is the input, and $A\in\mathbb{R}^{n \times n}$, $B\in\mathbb{R}^{n \times p}$ are the system matrices. Under  feedback control, the system input is $\bm{u}_j = -K\bm{z}_j$, where $K \in \mathbb{R}^{p \times n}$ is the control gain. Given an initial state $\bm{z}_0$ and weighting matrices $Q$ and $R$, the system's infinite-horizon LQR cost for a set of gains $K$ is
    \begin{align} \label{eq:lqr_cost}
        &J(K) {=} \sum_{j=0}^{\infty } \bm{z}_0^T(A{-}BK)^{Tj} [Q{+}K^TRK](A{-}BK)^{j}\bm{z}_0.
    \end{align}
    Our goal is to minimize \eqref{eq:lqr_cost} by finding the ideal gain $K^*$.\footnote{We demonstrate the algorithm on an LTI system with quadratic cost for simplicity's sake. However, our algorithm may also be applied to any system possessing a parameterized controller with a measurable performance metric.}
    \smallskip

    When $A$ and $B$ are \emph{unknown}, determining an ideal $K^*$ becomes more challenging.
    We consider a situation in which a design model of the system has the evolution $\bm{z}_{j+1} = \hat{A}\bm{z}_{j} + \hat{B}\bm{u}_{j}$ and associated cost $\hat{J}$,
    with $\hat{A}\in\mathbb{R}^{n \times n}$, $\hat{B}\in\mathbb{R}^{n \times p}$. The design model has the same dimension as the true system, but its entries differ from those in the true system. We aim to leverage the design model
    to quickly find an ideal $K^*$ while avoiding gains that cause instability.

    We propose using an MF-GP framework that \emph{only requires the input-output data} from the auxiliary and the true systems. Here, the input is the choice of gain $K$, and the output is $J(K)$. We apply an AR-1 model by treating $(\hat{A},\hat{B})$ and $(A,B)$ as the low- and high-fidelity models, respectively. By using a search algorithm that guarantees safety, we seek to avoid sampling unstable controller gains.

    \subsection{Multi-Fidelity Gaussian Processes (MF-GPs)}
    A Gaussian process is a collection of random variables such that every finite set of random variables has a multi-variate Gaussian distribution \cite{rasmussen2006gaussian}.
    A GP is defined over a space $\mathcal{X} \subset \mathbb{R}^{n}$ by its mean function $\mu\,:\,\mathcal{X} \to \mathbb{R}$ and its covariance (kernel) function $k\,:\,\mathcal{X}\times \mathcal{X} \to \mathbb{R}$.

    Given a set of points $\bm{X}_t = \{\x_1,\dots, \x_t\}$, we create a covariance matrix $\bm{k}(\bm{X}_t, \bm{X}_{t}) = [k(\x_i,\x_j)]_{i,j=1}^{t,t}$, which is always positive definite. The covariance between a point and a set of points yields a covariance vector $\bm{k}(\x) := \bm{k}(\bm{X}_t,\x) = [k(\x_1, \x) \ldots k(\x_t,\x)]^T$.

    Let $f$ be a sample from a GP with mean $\mu$ and kernel $k$.
    Suppose we have prior data $\bm{X}_t$ and $\bm{Y}_t = \{y_1,\dots, y_t\}$, where $y_i = f(\x_i) +\eta$ has measurement noise $\eta\sim N(0,\xi^2)$. Then the posterior distribution of $f$ at $\x$ is a normally distributed random variable with mean $\mu_{f,t+1}$, covariance $k_{f,t+1}$, and standard deviation $\sigma_{f,t+1}$ given by
    \begin{align}
        \mu_{f,t+1}(\x) &=
        \bm{k}^T(\x)
        [\bm{k}(\bm{X}_t,\bm{X}_t) + \xi^2 I]^{-1} \bm{Y}_t \label{eq:gp_mean} \\
        k_{f,t+1}(\x,\x') &= k_{f,t}(\x,\x'){-}\bm{k}^T(\x)
        [\bm{k}(\bm{X}_t,\bm{X}_t){+}\xi^2 I]^{-1} \bm{k}(\x') \nonumber \\
        \sigma_{f,t+1}(\x) &= \sqrt{k_{f,t+1}(\x,\x)}. \label{eq:gp_std}
    \end{align}

    To incorporate data from multiple sources, we use an \mbox{AR-1} model, which models $f$ as a linear combination of a low-fidelity GP $f_L(\x)$ and an error GP $\delta(\x)$ according to
    \begin{align} \label{eq:ar1}
        f(\x) = \rho f_L(\x) + \delta(\x),
    \end{align}
    where $\rho$ is a scaling constant \cite{kennedy2000predicting}.
    In general, an AR-1 model is beneficial when the low-fidelity observations $\bm{X}_L$ are more abundant than the high-fidelity observations $\bm{X}_H$.

    Let $\bm{k}^{(L)}$ denote the kernel of $f_L(\x)$ and $\bm{k}^{(\delta)}$ denote the kernel of $\delta(\x)$. Then, letting $\bm{X} = [\bm{X}_L, \bm{X}_H]$, the covariance matrix of the AR-1 model has the form
    \begin{align} \label{eq:covariance_matrix}
        \bm{k}^{(MF)}(\bm{X},\bm{X})=
        \begin{bmatrix}
            \cov{L}{L}{L}  & \rho \cov{L}{H}{L} \\
            \rho \cov{H}{L}{L} & \rho^2 \cov{H}{H}{L} + \cov{H}{H}{\delta} \\
        \end{bmatrix},
    \end{align}
    where $\cov{L}{H}{L}$ is shorthand notation for the single-fidelity covariance matrix $\bm{k}^{(L)}(\bm{X}_L,\bm{X}_H)$.


    \subsection{Problem Statement}\label{sec:problem}
    Consider a finite domain $\mathcal{X} \subset \mathbb{R}^n$, with $\x = (x_1, \dots, x_n) \in \mathcal{X}$. Let $f: \mathcal{X} \to \mathbb{R}$ be an unknown realization of a GP and let $\x^*$ be a minimizer of $f$. Given a safety barrier $h \in \mathbb{R}$ and precision $\epsilon>0$, our goal is to design a sequence $\{\x_t\}_{t\in \mathbb{N}}$ such that for some sufficiently large $t^*$,
    \[
    f(\x_t) < f(\x^*) + \epsilon,\; \forall t > t^*; \text{ and } f(\x_t)\leq h \; \forall t \in \mathbb{N}.
    \]

    We develop an iterative algorithm to design such a sequence $\{\x_t\}_{t \in \mathbb{N}}$. We apply this framework to the multi-fidelity case when an approximation of $f(\x)$ is available.

    \section{Algorithms and Main Results}
    In this section, we first review the \safeopt\ algorithm, which forms the framework of \safeslope. Next, we introduce \safeslope\ and describe how it deviates from \safeopt. We then discuss how \safeslope\ applies to MF-GPs, then discuss the theoretical properties of this algorithm.

    \subsection{The \safeopt\ Algorithm \cite{sui2015safe}}
    \safeopt\ is an exploration algorithm that uses the Lipschitz constant $L$ of a function $f$ to avoid searching in an unsafe domain. To accomplish this, \safeopt\ uses the predictive confidence interval
    \begin{align} \label{eq:Q}
        Q_{f,t}(\x) := \left[Q_{f,t}^-(\x),  Q_{f,t}^+(\x)\right],
    \end{align}
    where $Q_{f,t}^\pm(\x) := \mu_{f, t-1}(\x) \pm \beta_{f,t}^{1/2}\sigma_{f,t-1}(\x)$ and $\beta_{f,t}$ is a parameter which controls exploration.

    \textbf{Step 1:}
    Given an initial safe set $S_0$, we define $C_{f,0}(\x) := [h,\infty),$ $\forall \x \in S_0$ and $\mathbb{R}$ otherwise. Then, the nested confidence interval
    $C_{f,t}(\x) = C_{f,t-1}(\x) \cap Q_{f,t}(\x)$
    is used to define the upper and lower confidence bounds of $f$ as
    \begin{align} \label{eq:u_ell}
        u_{f,t}(\x):= \max C_{f,t}(\x) \text{ and }
        \ell_{f,t}(\x):= \min C_{f,t}(\x).
    \end{align}

    \textbf{Step 2:}
    These confidence bounds are used to establish the subsequent safe sets $S_t$ according to
    \begin{align*}
        S_t = \underset{\bm{x}\in S_{t-1}}{\bigcup} \left\{
        \bm{x}' \in \mathcal{X} \mid u_{f,t}(\bm{x}) + L d(\bm{x},\bm{x}') \leq h
        \right\},
    \end{align*}
    where $d(\bm{x},\bm{x}')$ is the distance between $\x$ and $\x'$.

    \textbf{Step 3:}
    Two subsets of $S_t$ guide the search process. The set of points that potentially minimize $f$ is given by
    \begin{align*}
        M_t = \left\{ x \in S_t \middle|\ell_{f,t}(\x) \leq \min_{\x' \in S_t} u_{f,t}(\x')  \right\}.
    \end{align*}

    \textbf{Step 4:}
    Meanwhile, the set of points that potentially increase the size of $S_t$ is given by
    \begin{align*}
        G_t = \left\{ x \in S_t \middle|g_t(\x) >0 \right\},
    \end{align*}
    where $g_t(\x)$ is the cardinality of the set of points that sampling at $\x$ could add to $S_t$, defined by
    \begin{align*}
        g_t(\bm{x}) := \Big|\left\{
        \bm{x}' \in \mathcal{X} \backslash S_t \mid \ell_{f,t}(\bm{x}) + L d(\bm{x},\bm{x}') \leq h
        \right\}\Big|.
    \end{align*}

    \textbf{Step 5:} From the union of $M_t$ and $G_t$, \safeopt\ selects points using the width of the confidence interval $w_t(\x) := u_{f,t}(\x) - \ell_{f,t}(\x)$ according to the function
    \begin{align} \label{eq:sampling}
        \x_t \in \underset{\x \in M_t \cup G_t}{\arg\max } \; w_t(\x).
    \end{align}

    \subsection{The \safeslope\ Algorithm}
    The \safeslope\ algorithm is an adaptation of \safeopt\ with the following modification: \emph{we assume the global Lipschitz constant is unknown} and instead use local slope predictions to avoid searching beyond the safety limit.
    \begin{algorithm} [t]
        \caption{\safeslope} \label{alg:SS}
        \begin{algorithmic}[1]
            \STATE \textbf{Input:} GP $f$, Safe limit $h$, Discrete grid domain $\mathcal{X}$, Initial safe set $S_0$, Grid incidence matrices $W_i$.
            \STATE $C_{f,0}(\x) \leftarrow [h,\infty), \hspace{5pt} \forall \x \in S_0$.
            \STATE $C_{f,0}(\x) \leftarrow \mathbb{R}, \hspace{5pt} \forall \x \in \mathcal{X} \backslash S_0$.
            \STATE $C_{m_i,0}(\x) \leftarrow \mathbb{R}, \hspace{5pt} \forall \x \in \mathcal{X}$ and for each $i=1,\dots, n$.
            \FOR {$t = 1,2,\dots$}
            \STATE Calculate $\bm{\mu}_{f,t}(\mathcal{X})$ using \eqref{eq:gp_mean}.
            \STATE Calculate $\bm{k}_{f,t}(\mathcal{X},\mathcal{X})$ using \eqref{eq:gp_std}.
            \FOR {$i=1,\cdots,n$}
            \STATE $\bm{\mu}_{m_i,t}(\mathcal{X}) \leftarrow W_i\cdot \bm{\mu}_{f,t}(\mathcal{X})$
            \STATE $\bm{k}_{m_i,t}(\mathcal{X},\mathcal{X}) \leftarrow W_i \cdot \bm{k}_{f,t}(\mathcal{X},\mathcal{X}) \cdot W_i^T$
            \STATE Compute $q_{m_i,t}(\x,\x'), \hspace{5pt} \forall \x \in \mathcal{X}$ using \eqref{eq:Q_m}.
            \STATE $\hat{u}_{m_i,t} \leftarrow \min\{q_{m_i,t}, \hat{u}_{m_i,t-1}\}$
            \ENDFOR
            \STATE Compute $Q_{f,t}(\x) \hspace{5pt} \forall \x \in S_{t-1}$
            \STATE $C_{f,t}(\x) \leftarrow C_{f,t-1}(\x) \cap Q_{f,t}(\x)$
            \STATE $S_t \leftarrow \cup_{\x\in S_{t-1}} \cup_{i=1,\dots,n} \left\{\x' {\in} V_i(\x) \mid s_t(\x,\x')\leq h \right\}$ \label{alg_line:S_t}
            \STATE $M_t \leftarrow \left\{ \x \in S_t \mid\ell_t(\x) \leq \min_{\x' \in S_t} u_t(\x')  \right\}$ \label{alg_line:M_t}
            \STATE $G_t \leftarrow \left\{ \x \in S_t \mid g_t(\x) >0 \right\}$ \label{alg_line:G_t}
            \STATE $\x_t \leftarrow \underset{\x \in M_t \cup G_t}{\arg\max}w_t(\x)$
            \STATE $y_t \leftarrow f(\x_t) + \xi_t $
            \ENDFOR
        \end{algorithmic}
    \end{algorithm}

    To do so, we model the slopes of $f$ as GPs. For ease of presentation, we organize $\mathcal{X}$ into a hypercube with $r^n$ points. Along each axis $i \in \{1,\dots,n\}$, we create an incidence matrix $W_i$ with size ${(r-1)r^{n-1} \times r^n}$. Each $W_i$ corresponds to the union of directed line graphs along the $i$-th axis.
    Then, at iteration $t$, we represent the slopes between adjacent points along the $i$-th axis using $m_i \in \mathbb{R}^{(r-1)r^{n-1}}$. Each $m_i$ is a realization of a GP with mean and covariance
    \begin{align*}
        \bm{\mu}_{m_i} = W_i \cdot \bm{\mu}_{f,t}(\mathcal{X}),\quad
        \bm{k}_{m_i} = W_i \cdot \bm{k}_{f,t}(\mathcal{X}) \cdot W_i^T.
    \end{align*}
    Essentially, the elements of $m_i$ consist of evaluations of
    \begin{align*}
        m_i(\x,\x') = [\mu_f(\x')-\mu_f(\x)] / d(\x',\x),
    \end{align*}
    where $\x$ and $\x'$ are adjacent points along the $i$-th axis, $x_i' > x_i$, and $d(\x',\x)$ is the distance between $\x'$ and $\x$.

    \textbf{Step 1:}
    We preserve the format of \safeopt's safety condition by using the magnitude of the slope. Here, we use the greatest magnitude of the confidence bounds, defined by
    \begin{align} \label{eq:Q_m}
        q_{m_i,t}(\x,\x') {:=} \max \left\{ \text{abs}(Q_{m_i,t}^-(\x,\x')),  \text{abs}(Q_{m_i,t}^+(\x,\x'))\right\},
    \end{align}
    where
    \[
    Q_{m_i,t}^\pm(\x,\x') := \mu_{m_i, t-1}(\x,\x') \pm \beta_{m,t}^{1/2}\sigma_{m_i,t-1}(\x,\x').
    \]
    Then, we replace $L$ with the nested upper bound on the slope
    \begin{align} \label{eq:u_hat}
        \hat{u}_{m_i,t}(\x,\x') := \min\{q_{m_i,t}(\x,\x'), \hat{u}_{m_i,t-1}(\x,\x')\},
    \end{align}
    where $\hat{u}_{m_i,0}=\infty$.\footnote{Instead of using the confidence bound with the greatest magnitude $\hat{u}_{m_i,t}$, the upper bound of each $m_i$ (i.e., $Q_{m_i,t}^+$) is used instead. In this case, displacement is used instead of distance. However, in our numerical simulations, we found this upper bound to be inferior to $\hat{u}_{m_i,t}$.}

    \textbf{Step 2:}
    We now redefine the safe set as
    \begin{align}\label{eq:S_t}
        S_t =\bigcup_{\x\in S_{t-1}} \bigcup_{i=1,\dots,n} &\big\{\x' \in V_i(\x) \mid s_t(\x,\x') \leq h \big\},
    \end{align}
    where
    \begin{align*}
        s_t(\x,\x') = u_{f,t}(\x) + \hat{u}_{m_i,t}(\x,\x')\cdot d(\x,\x')
    \end{align*}
    and the vicinity $V_i$ of $\x$ is given by
    \begin{align*}
        V_i(\x) = \left\{ \x' \in \mathcal{X} \middle| \x', \x \text{ are adjacent and } x_i'= x_i \right\}.
    \end{align*}

    \textbf{Steps 3 and 4:}
    The definitions of $M_t$ and $G_t$ are the same as those in SAFE-OPT, but the growth criterion becomes
    \begin{align*}
        g_t(\x) = \Big|\left\{
        \x' \in V_i(\x) \backslash S_t \middle|\ell_{f,t}(\x) + \hat{u}_{m_i,t} d(\x,\x') \leq h
        \right\}\Big|.
    \end{align*}

    \textbf{Step 5:}
    Similar to \safeopt, points are sampled using the redefined $M_t$ and $G_t$ according to \eqref{eq:sampling}.

    \subsection{Multi-fidelity Extension of \safeslope}
    We can use \safeslope\ to sample points from the highest fidelity of an MF-GP. Consider an AR-1 GP with  fidelities, $f_L$ and $f$. We evaluate $f_L$ at every $\x \in \mathcal{X}$ to construct a data set $(\bm{Y}_L,\bm{X}_L)$.
    We also evaluate $f$ at a starting point $\x_0 = \underset{\x\in\mathcal{X}}{\arg\min}f_L(\x)$.
    Then, with $\x_0$ as $S_0$, \safeslope\ is used to explore the AR-1 GP and find $\x^*$.
    This extension is formalized in Algorithm \ref{alg:MF_SS}.

    \begin{algorithm} [thb]
        \caption{Multi-Fidelity \safeslope\ Optimization} \label{alg:MF_SS}
        \begin{algorithmic}[1]
            \STATE \textbf{Input:} Safe Limit $h$, Discrete domain $\mathcal{X}$
            \STATE Assume $f = \rho f_L + \delta$
            \STATE Evaluate $f_L(\x)$ for all $\mathcal{X}$
            \STATE $\x_0 \leftarrow \underset{\x\in\mathcal{X}}{\arg\min}f_L(\x)$
            \STATE Evaluate $f(\x)$  for $\x_0$
            \STATE $S_0 \leftarrow \x_0$
            \STATE Conduct \safeslope($f|(\bm{Y}_L,\bm{X}_L)$, $h$, $\mathcal{X}$, $S_0$)
        \end{algorithmic}
    \end{algorithm}

    \subsection{Reachability}
    Similar to \safeopt, the theoretical guarantees of \safeslope\ rely on the reachability operator. Define $\hat{u}_t := [\hat{u}_{m_1,0}, \dots, \hat{u}_{m_n,0}]^T$. Then the reachability operator at time $t$ is the set of points given by
    \begin{align*}
        &R_{\epsilon,\hat{u}_t}(S) := \\
        & S \cup\left\{\x' \in \mathcal{X} \middle|
        \begin{array}{c}
            \exists \x \in S, \exists i \in \{1,\dots,n\}, \x' \in V_i(\x),\\
            f(\x)  + \hat{u}_{m_i,t}(\x,\x') \cdot d(\x,\x') + \epsilon \leq h
        \end{array}
        \right\},
    \end{align*}
    where $\hat{u}_{m_i,t}(\x,\x')$ is the upper bound on the slope between $\x$ and $\x'$ at time $t$.
    Given the current set of safe points,
    the reachability operator provides the total collection of points that could be sampled as $f$ is learned within $S$.

    The $T$-step reachability operator is defined by
    \begin{align}
        R_{\epsilon}^T(S) := R_{\epsilon,\hat{u}_T}(R_{\epsilon,\hat{u}_{T-1}}\dots(R_{\epsilon,\hat{u}_0}(S))).
    \end{align}
    By taking the limit, we obtain the closure set $\bar{R}_{\epsilon}(S) := \lim_{T\rightarrow \infty} R_{\epsilon}^T(S)$.
    Note: In \safeslope, we restrict the expansion of the safe set to the vicinity of the previous safe set. This restriction does not affect the closure of the reachability set, but only slows down the rate of expansion.
    Because \safeslope\ never explores outside $\bar{R}_\epsilon(S_0)$ with probability 1, we modify our optimization goal from Section~\ref{sec:problem} to take the equivalent form,
    \begin{align*}
        f^*_\epsilon = \min_{\x\in\bar{R}_{\epsilon}(S_0)} f(\x).
    \end{align*}

    \subsection{Theoretical Results}
    For Bayesian approaches, we measure the information gain after sampling a set of points $A \subseteq \mathcal{X}$ as
    $I(\bm{y}_A; \bm{f}_A) = H(\bm{y}_A) - H(\bm{y}_A|f)$,
    where $\bm{y}_A$ is a random vector of noisy observations of $f$ evaluated at every point in $A$, $\bm{f}_A$ is the vector of true values of $f$ at every point in $A$, and $H$ is the entropy of the vector.
    The maximum information gain after $T$ evaluations of $f$ is given by
    \begin{align} \label{eq:gamma}
        \gamma_T = \underset{A\subset \mathcal{X}, |A|=T}{\max}I(\bm{y}_A; \bm{f}_A).
    \end{align}
    A bound on the $\gamma_T$ can be found in \cite[Eq. (8)]{srinivas2012information}.
    With the information gain defined, we now move to the main theorem.
    \smallbreak
    \begin{theorem}[Single-Fidelity \safeslope\ Guarantees] \label{thm:ss}
        Define $\hat{\x}_t := \arg \min_{x\in S_t} u_{f,t}(\x)$. Select $\delta_f, \delta_m \in (0,1)$. Set $\beta_{f,t}= 2 \log(|\mathcal{X}|\pi_t/\delta_f)$ and $\beta_{m,t}= 2 \log(|\mathcal{X}|n\pi_t/\delta_m)$, where $\sum_{t\geq1} \pi_t^{-1}=1$ with $\pi_t>0$. Given an initial safe set $S_0 \neq \varnothing$, with $f(\x) \leq h$ for each $\x \in S_0$, let $t^*$ be the smallest positive integer satisfying
        \begin{align*}
            \frac{t^*}{\gamma_{t^*}\beta_{f,t^*}}\geq \frac{C_1\left(|\Bar{R}_{0}(S_0)|+1 \right)}{\epsilon^2},
        \end{align*}
        where $C_1 = 8 v^2 /\log(1+v^2\xi^{-2})$, $v^2$ is the kernel variance, and $|\bullet|$ denotes cardinality. Then, for any $\epsilon > 0$, using \safeslope\ with $\beta_{f,t}$ and $\beta_{m,t}$ results in the following.
        \begin{compactitem}
            \item With probability at least $1-\delta_f-\delta_m$,
            \[
            \forall t \geq 1, f(\x_t)\leq h.
            \]
            \item With probability at least $1-\delta_f$,
            \[
            \forall t \geq t^*, f(\hat{\x}_t) < f^*_\epsilon + \epsilon. \tag*{\qedsymbol}
            \]
        \end{compactitem}
    \end{theorem}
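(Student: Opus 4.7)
The plan is to follow the general structure of the \safeopt\ convergence proof from \cite{sui2015safe}, adapting it to the fact that the Lipschitz-like bound $\hat{u}_{m_i,t}$ is now itself a learned random quantity derived from the slope GPs rather than a known constant. The argument splits naturally into (i) establishing two high-probability confidence events, (ii) an inductive safety proof, and (iii) a pigeonhole-plus-information-gain argument for convergence.

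First I would invoke the standard concentration bound for GPs (Lemma 5.1 of \cite{srinivas2012information}) with the choice $\beta_{f,t} = 2\log(|\mathcal{X}|\pi_t/\delta_f)$ and a union bound over $\mathcal{X}$ and $t$, giving that $|f(\x)-\mu_{f,t-1}(\x)| \leq \beta_{f,t}^{1/2}\sigma_{f,t-1}(\x)$ for all $\x,t$ with probability at least $1-\delta_f$. Since each slope vector $m_i = W_i f$ is a linear transformation of a Gaussian, it is itself a GP, so the same Lemma applies with $\beta_{m,t} = 2\log(|\mathcal{X}|n\pi_t/\delta_m)$; the extra factor $n$ absorbs the union bound over the $n$ axis directions. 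On this second event, which holds with probability at least $1-\delta_m$, $|m_i(\x,\x')| \leq q_{m_i,t}(\x,\x')$ for all adjacent $(\x,\x')$ and all $t,i$. Intersecting both events yields probability $1-\delta_f-\delta_m$.

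Safety is then obtained by induction on $t$ on the joint event. Assuming $f(\x)\leq h$ for every $\x \in S_{t-1}$, any $\x' \in S_t\setminus S_{t-1}$ is added because there is some $\x \in S_{t-1}$ and axis $i$ with $\x' \in V_i(\x)$ and $u_{f,t}(\x) + \hat{u}_{m_i,t}(\x,\x')\,d(\x,\x') \leq h$. On the good event $f(\x)\leq u_{f,t}(\x)$, and $|f(\x')-f(\x)|/d(\x,\x') = |m_i(\x,\x')| \leq q_{m_i,t}(\x,\x') \leq \hat{u}_{m_i,t}(\x,\x')$, where the last step uses that $\hat{u}_{m_i,t}$ is the running minimum of past $q$-values each of which bounds the true slope. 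Therefore $f(\x') \leq f(\x) + \hat{u}_{m_i,t}(\x,\x')\,d(\x,\x') \leq h$, closing the induction and giving the safety conclusion.

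For convergence I would port the \safeopt\ argument: the information-gain identity gives $\sum_{t=1}^T w_t(\x_t)^2 \leq C_1 \beta_{f,T}\gamma_T$ with $C_1 = 8v^2/\log(1+v^2\xi^{-2})$. A pigeonhole over any window of $|\bar R_0(S_0)|+1$ iterations produces some $t$ with $w_t(\x_t)^2 \leq C_1\beta_{f,t}\gamma_t/(|\bar R_0(S_0)|+1)$; the definition of $t^*$ makes this $\leq \epsilon^2$. A case split then shows that either the safe set has already expanded to cover $\bar R_\epsilon(S_0)$, in which case shrinking widths give $u_{f,t}(\hat \x_t) \leq \ell_{f,t}(\hat \x_t)+\epsilon \leq f^*_\epsilon + \epsilon$, or an expansion candidate in $G_t$ remains but its width is $\leq \epsilon$, yielding the same bound on $f(\hat\x_t)$. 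The main obstacle will be this reachability comparison: we must show the \emph{learned}, time-varying bounds $\hat u_t$ — since they upper bound the true slope on the good event and are nested monotonically downward — admit the same monotone expansion guarantees used in \safeopt, so that the closure $\bar R_\epsilon(S_0)$ defined through the $\hat u_t$ sequence is in fact recovered within $t^*$ steps. Tying together monotonicity of $\hat u_t$, the pigeonhole, and the reachability comparison is where most of the technical care will be required.
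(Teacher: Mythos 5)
Your proposal is correct and mirrors the paper's own proof: the same two union-bounded confidence events (with the factor $n$ absorbing the $n$ slope GPs), the same inductive safety argument via $u_{f,t}(\x)+\hat{u}_{m_i,t}(\x,\x')d(\x,\x')\leq h$, and the same information-gain/pigeonhole convergence argument hinging on the nestedness of the learned bounds $\hat{u}_{m_i,t}$ and the reachability closure. The only blemish is the typographical reversal in the chain ``$|m_i|\leq q_{m_i,t}\leq \hat{u}_{m_i,t}$'' (it should read $|m_i|\leq\hat{u}_{m_i,t}\leq q_{m_i,t}$), which your own justification already corrects.
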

    \smallskip

    The first point of Theorem \ref{thm:ss} states that with high probability, \safeslope\ will sample points under a threshold $h$. This probability is directly tied to $\beta_f$ and $\beta_m$, parameters that quantify the algorithm's tendency to explore points in unexplored regions. The second point states that with high probability, after time $t^*$, the minimum yielded by \safeslope\ will fall within an $\epsilon$-neighborhood of $f_\epsilon^*$. This value of $t^*$ scales intuitively with the information gain $\gamma_{t^*}$, since more information to learn requires a greater search iteration count. Because $\gamma_{t^*}$ lacks a closed-form solution, a bound on $\gamma_{t^*}$ is typically used instead.



    Our second main result is an extension of Theorem \ref{thm:ss} to an AR-1 model. But first, we establish an upper bound on the information gain $\gamma_T$ for an AR-1 model.

    \medskip

    \begin{theorem}[Information Gain Bound for an AR-1 GP]\label{thm:info_gain}
        Consider the information gain $\gamma_T$ from \eqref{eq:gamma}.
        For a linear auto-regressive GP with noise-free ($\xi_L^2=0$) low-fidelity observations at $\bm{X}_L$ and high-fidelity observations at $\bm{X}_H \subseteq \bm{X}_L$, the information gain $\gamma_T$ is upper bounded by
        \begin{align} \label{eq:gamma_tilde}
            \tilde{\gamma}_T := \frac{1/2}{1-e^{-1}}
            \max_{m_1,...,m_T} \sum_{t=1}^T \log \left(1+\xi^{-2} m_t \lambda_t^{(\delta)} \right),
        \end{align}
        where $\sum_{i=1}^T m_i=T$ and $\lambda_t^{(\delta)}$ are the eigenvalues of the error covariance matrix $\cov{H}{H}{\delta}$.
    \end{theorem}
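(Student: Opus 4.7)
The plan is to reduce the AR-1 information gain, in the setting with noise-free low-fidelity data at a superset of the high-fidelity evaluation points, to an information gain involving only the error process $\delta$. Once this reduction is in place, the bound $\tilde\gamma_T$ follows from the same greedy-plus-eigenvalue argument used in \cite[Eq.~(8)]{srinivas2012information}.

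First, I would rewrite the information gain from any set of $T$ noisy high-fidelity samples at $\bm{X}_H$, conditional on the available noise-free low-fidelity samples at $\bm{X}_L$, in the closed form $I = \tfrac12\log\det\bigl(I + \xi^{-2}\Sigma_{H\mid L}\bigr)$, where $\Sigma_{H\mid L}$ is the conditional covariance of $f(\bm{X}_H)$ given $f_L(\bm{X}_L)$. Using the AR-1 joint covariance in \eqref{eq:covariance_matrix} and the standard Gaussian (Schur-complement) conditioning formula,
\[
\Sigma_{H\mid L} = \rho^2 \cov{H}{H}{L} + \cov{H}{H}{\delta} - \rho^2\,\cov{H}{L}{L}\bigl(\cov{L}{L}{L}\bigr)^{-1}\cov{L}{H}{L}.
\]
Here the assumption $\bm{X}_H \subseteq \bm{X}_L$ is critical: after reordering $\bm{X}_L$ so that $\bm{X}_H$ appears as a leading block, the cross-covariance factors as $\cov{H}{L}{L} = E_H\,\cov{L}{L}{L}$ with $E_H$ a row selector, so the Schur correction collapses to $\rho^2\cov{H}{H}{L}$. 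This cancels the first term and yields $\Sigma_{H\mid L} = \cov{H}{H}{\delta}$.

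Next, I would invoke the two classical ingredients of the proof of \cite[Eq.~(8)]{srinivas2012information} applied to the effective kernel $k^{(\delta)}$: (i) submodularity of the information-gain set function implies that greedy sequential selection achieves at least a $(1-e^{-1})$ fraction of the maximum, contributing the factor $\tfrac{1}{1-e^{-1}}$; and (ii) the greedy sum of conditional log-variance terms is majorized, via Hadamard's inequality and eigenvalue interlacing, by $\tfrac12\sum_{t=1}^T \log(1+\xi^{-2} m_t \lambda_t^{(\delta)})$ for any allocation $(m_1,\dots,m_T)$ of the $T$ samples among the eigendirections of $\cov{H}{H}{\delta}$ with $\sum_t m_t = T$. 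Taking the maximum over such allocations produces exactly $\tilde\gamma_T$.

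The main obstacle is cleanly executing the Schur-complement cancellation in the first step: one must explicitly verify that $\bm{X}_H \subseteq \bm{X}_L$ together with $\xi_L^2 = 0$ causes the low-fidelity contribution to the conditional variance to vanish, so that the conditional covariance is a pure error covariance independent of $\rho$ and of $\cov{L}{L}{L}$. After that reduction, the greedy/eigenvalue argument transfers verbatim from the single-fidelity setting, with $\cov{H}{H}{\delta}$ playing the role of the standard kernel matrix, so no new analytic estimates are required.
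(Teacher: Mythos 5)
Your proposal is correct and follows essentially the same route as the paper: both reduce the problem by showing that, with noise-free low-fidelity data and $\bm{X}_H \subseteq \bm{X}_L$, the conditional covariance of $f(\bm{X}_H)$ given $f_L(\bm{X}_L)$ collapses to $\cov{H}{H}{\delta}$, and then apply the bound of \cite[Eq.~(8)]{srinivas2012information} with this error covariance in place of the usual kernel matrix. Your selector-matrix observation $\cov{H}{L}{L} = E_H\,\cov{L}{L}{L}$ is in fact a cleaner way to obtain the cancellation than the paper's explicit block-matrix inversion, but it establishes the same identity, so the arguments are equivalent in substance.
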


    \begin{proof}
        Suppose we have the high- and low-fidelity input points $\bm{X}_H$ and $\bm{X}_L$, where $\bm{X}_H \subseteq \bm{X}_L$,  $\bm{X}_{H'} = \bm{X}_L\backslash \bm{X}_H$, and each entry of $\bm{X}_L$ is unique. Then, $\bm{X}_L = \bm{X}_H \cup \bm{X}_{H'}$. Since the covariance matrix is always positive definite, $\cov{L}{L}{L}$ is invertible, and the covariance of the high-fidelity data conditioned on the low-fidelity data is given by
        \begin{align*}
            &\bm{k}(f_H(\bm{X}_H),f_H(\bm{X}_H) |f_L(\bm{X}_L)=\bm{y}_L,f_H(\bm{X}_H)=\bm{y}_H)\\
            &=
            \rho^2 \cov{H}{H}{L} + \cov{H}{H}{\delta} - \rho^2 \cov{H}{L}{L} [\cov{L}{L}{L}]^{-1} \cov{L}{H}{L}\\
            &=
            \rho^2 \cov{H}{H}{L} + \cov{H}{H}{\delta} - \rho^2 \begin{bmatrix}
                \cov{H}{H'}{L} & \cov{H}{H}{L}
            \end{bmatrix} \\
            &\phantom{{}=} \times
            \begin{bmatrix}
                \cov{H'}{H'}{L} & \cov{H'}{H}{L}\\
                \cov{H}{H'}{L} & \cov{H}{H}{L}
            \end{bmatrix} ^ {-1}
            \begin{bmatrix}
                \cov{H'}{H}{L} \\
                \cov{H}{H}{L}
            \end{bmatrix}\\
            &=
            \rho^2 \cov{H}{H}{L} + \cov{H}{H}{\delta} \\
            &\phantom{{}=}- \rho^2
            \left[
            (\cov{H}{H'}{L} - \cov{H}{H}{L} [\cov{H}{H}{L}]^{-1} \cov{H}{H'}{L}) \right.\\
            &\phantom{{}=-\rho^2[+}\times
            (\cov{H'}{H'}{L} - \cov{H'}{H}{L} [\cov{H}{H}{L}]^{-1} \cov{H}{H'}{L})\\
            &\phantom{{}=-\rho^2[+}\times
            (\cov{H'}{H}{L} - \cov{H'}{H}{L} [\cov{H}{H}{L}]^{-1} \cov{H}{H}{L}) \\
            &\phantom{{}=-\rho^2[}+ \left.(\cov{H}{H}{L} [\cov{H}{H}{L}]^{-1} \cov{H}{H}{L})
            \right]\\
            &= \cov{H}{H}{\delta},
        \end{align*}
        where the second to last line is obtained using properties of block matrix inversion.
        In words, the conditional covariance is simply the covariance of the error GP $\delta(\x)$.
        By applying the above result to \cite[Eq. (8)]{srinivas2012information}, we complete the proof.
    \end{proof}
    \begin{remark}
        As the quality of a low-fidelity model improves,
        the variance of the error GP approaches 0. Since the eigenvalues of a covariance matrix are directly proportional to the kernel's variance hyper-parameter, Theorem \ref{thm:info_gain} shows that improving the low-fidelity quality decreases the eigenvalues of $\cov{H}{H}{\delta}$, thereby decreasing the information gain.
    \end{remark}
    \medskip
    \begin{remark}
        Using an AR-1 model, a series of $p$ fidelities may be nested to obtain
        \begin{align} \label{eq:ar1_nested}
            f_p(\x) = \rho_p (f_{p-1}(\x)) + \delta_p(\x).
        \end{align}
        From the proof of Theorem \ref{thm:info_gain}, we see that the conditional covariance of a nested AR-1 model depends only on the highest level error GP $\delta_p(\x)$.
    \end{remark}
    \begin{theorem}[Multi-Fidelity \safeslope\ Guarantees] \label{thm:mf_ss}
        Assume $f$ is an AR-1 GP with the structure given in \eqref{eq:ar1}.
        Consider $\hat{\x}_t$, $\delta_f$, $\delta_m$, $\beta_{f,t}$, $\beta_{m,t}$, $\pi_t$, and $S_0$ as defined in Theorem \ref{thm:ss}.
        Let $t_{MF}^*$ denote the smallest positive integer satisfying
        \begin{align*}
            \frac{t_{MF}^*}{\tilde{\gamma}_{t_{MF}^*} \beta_{f,t_{MF}^*} }\geq \frac{C_1\left(|\Bar{R}_{0}(S_0)|+1 \right)}{\epsilon^2},
        \end{align*}
        where $\tilde{\gamma}_{t_{MF}^*}$ is defined by \eqref{eq:gamma_tilde}, $C_1 = 8 v_{MF}^2 /\log(1+v_{MF}^2\xi^{-2})$, and $v_{MF}^2$ is the variance of the AR-1 GP, given by $v_{MF}^2 = \rho v_L^2 + v_\delta^2$.
        Then, for any $\epsilon > 0$, using \safeslope\ with $\beta_{f,t}$ and $\beta_{m,t}$, with probability at least $1-\delta_f$,
        \[
        \forall t \geq t_{MF}^*, f(\hat{\x}_t) < f^*_\epsilon + \epsilon. \tag*{\qedsymbol}
        \]
    \end{theorem}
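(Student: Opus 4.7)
The plan is to reduce Theorem~\ref{thm:mf_ss} to a direct application of Theorem~\ref{thm:ss} on the AR-1 posterior, combined with the sharper information-gain bound of Theorem~\ref{thm:info_gain}. The starting observation is that if $f_L$ and $\delta$ are independent GPs, then $f = \rho f_L + \delta$ is itself a GP, with kernel variance $v_{MF}^2$ as stated. Algorithm~\ref{alg:MF_SS} first evaluates $f_L$ noise-free on all of $\mathcal{X}$, producing a low-fidelity dataset $(\bm{Y}_L,\bm{X}_L)$, and then invokes \safeslope\ on the conditional posterior of $f$ given this dataset. Since conditioning preserves Gaussianity, the conditional process is again a GP to which Theorem~\ref{thm:ss} applies verbatim, and the safe set $S_0=\{\x_0\}$ still satisfies $f(\x_0)\leq h$ with the same probability.

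Next, I would trace through the derivation that produced the single-fidelity threshold $t^*/(\gamma_{t^*}\beta_{f,t^*}) \geq C_1(|\bar{R}_0(S_0)|+1)/\epsilon^2$. That inequality is assembled from three ingredients: (i)~the Srinivas et al.\ bound $\sum_{t=1}^{T} \sigma_{f,t-1}^2(\x_t) \leq C_1 \gamma_T$, (ii)~the confidence-interval width $w_t(\x_t) \leq 2\beta_{f,t}^{1/2}\sigma_{f,t-1}(\x_t)$, and (iii)~the reachability argument that after at most $|\bar{R}_0(S_0)|+1$ full expansions of the safe set, $w_t(\hat{\x}_t)$ falls below $\epsilon$. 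In the multi-fidelity setting, ingredients (ii) and (iii) transfer unchanged because they depend only on the Gaussianity of the posterior and the structure of the reachability operator. Ingredient (i) is the one that improves: conditioning $f$ on the fully observed low-fidelity grid $\bm{X}_L=\mathcal{X}$ makes the conditional covariance collapse to $\cov{H}{H}{\delta}$ (cf.\ proof of Theorem~\ref{thm:info_gain}), so the information gain of the process actually queried by \safeslope\ is upper bounded by $\tilde{\gamma}_T$. Substituting $\tilde{\gamma}_T$ for $\gamma_T$ and updating $C_1$ to use the AR-1 variance $v_{MF}^2$ in place of $v^2$ yields precisely the threshold $t_{MF}^*$ in the statement.

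The main obstacle will be ensuring that the conditioning on $(\bm{Y}_L,\bm{X}_L)$ does not disturb the union-bound arguments that calibrate $\beta_{f,t}$ and $\beta_{m,t}$ in Theorem~\ref{thm:ss}. Since the low-fidelity data is fixed side information and the posterior at every $\x\in\mathcal{X}$ remains Gaussian with explicit conditional mean and variance, the standard union bound over the finite grid $\mathcal{X}$ still delivers the same exploration parameters, and the concentration statement underlying the slope bounds $\hat{u}_{m_i,t}$ is also preserved. Once this is checked, the remainder is bookkeeping: solve the revised inequality for $T$ to obtain $t_{MF}^*$, and invoke the reachability argument of Theorem~\ref{thm:ss} verbatim to conclude $f(\hat{\x}_t)<f_\epsilon^*+\epsilon$ for all $t\geq t_{MF}^*$ with probability at least $1-\delta_f$.
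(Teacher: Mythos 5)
Your proposal is correct and follows essentially the same route as the paper: the paper gives no separate proof of Theorem~\ref{thm:mf_ss}, treating it as an immediate corollary obtained by applying Theorem~\ref{thm:ss} to the AR-1 posterior conditioned on the noise-free low-fidelity data and replacing the information-gain term with the bound $\tilde{\gamma}$ from Theorem~\ref{thm:info_gain} (and $v^2$ with $v_{MF}^2$ in $C_1$). Your decomposition into the width bound, the reachability argument, and the information-gain substitution simply makes that reduction explicit.
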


    This theorem indicates that the quality of a multi-fidelity model impacts the time $t_{MF}^*$ to identify an optimal $\hat{\x}$. In particular, improving the quality of the low-fidelity model lowers the information gain bound $\tilde{\gamma}_{t_{MF}^*}$, thereby decreasing the time to find an optimal $\hat{\x}$.

    \section{Numerical Results}
    We now apply \safeslope\ to our motivating scenario, in which we try to find the best controller for a system when an approximate model of the system exists.
    \smallbreak

    For the motivating scenario from Section~\ref{sec:scenario}, consider a $2\times 2$ LTI system. For the true system, we let
    \begin{align} \label{eq:AB_True}
        A = \begin{bmatrix}
            0.785 & -0.260 \\ -0.260 & 0.315
        \end{bmatrix},
        \hspace{5pt}
        B = \begin{bmatrix}
            1.475 \\ 0.607
        \end{bmatrix}.
    \end{align}

    By applying system identification \cite{ho1966effective} to \eqref{eq:AB_True} with $N_s=12$ snapshots, we obtain the approximate model,
    \begin{align} \label{eq:AB_Hat}
        \hat{A} = \begin{bmatrix}
            0.700 & -0.306 \\ -0.306 & 0.342
        \end{bmatrix},
        \hspace{5pt}
        \hat{B} = \begin{bmatrix}
            1.543 \\ 0.524
        \end{bmatrix}.
    \end{align}

    Since unstable controllers result in extremely large costs, we modify the cost functions to be
    \begin{align} \label{eq:f}
        f(\x) = \log(J(\x)), \hspace{5pt} f_{L}(\x) = \log(\hat{J}(\x)),
    \end{align}
    where $J$ and $\hat{J}$ are approximated by a 20-step horizon quadratic cost with $Q=I$, $R=1$ and $\x$ now represents the choice of controller gains.  Gaussian noise with variance $\xi^2  = 10^{-4}$ and $\xi_L^2  = 10^{-8}$ is added to evaluations of $f$ and $f_L$ to ensure kernel matrices are well-conditioned.

    \begin{figure}[b]
        \centering
        \includegraphics[width = 0.48\textwidth]{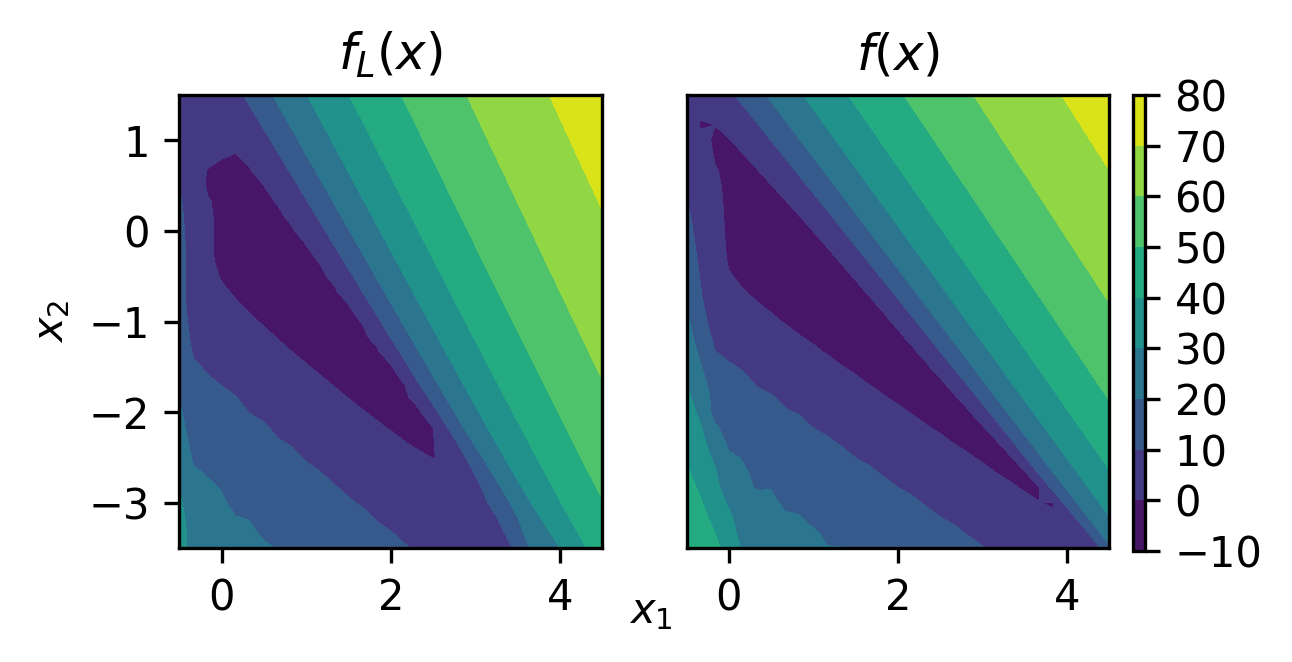}
        \caption{True Plots of $f_L(\x)$, $f(\x)$. Darker regions indicate lower LQR costs}
        \label{fig:true_plot}
    \end{figure}

    Our goal is to find the controller gains $\x^* = [x_1^* \hspace{5pt} x_2^*]$ such that \eqref{eq:f} is minimized. First, we set a search domain $\mathcal{X}$ and select an initial safe set $S_0$. In practice, input constraints and low-fidelity data could guide the choice of $\mathcal{X}$ and $S_0$. Here, we set $x_1 \in [-0.5,4.5]$, $x_2 \in [-3.5,1.5]$, and resolution $r=26$. Mat\'ern kernels are used to correlate points for each fidelity \cite{rasmussen2006gaussian}.
    For 10 different $S_0$'s of three points each, we observe the safety and regret of \safeslope\ with parameters $h=0$, $\delta_f = 0.1$, $\delta_m = 0.1$, and $\pi_t = t^2 \pi^2/6$.
    We compare \safeslope\ to \safeucb, a naive approach that solely relies on $u_{f,t}(\x)$ for safety and selects points according to
    \begin{align*}
        \x_t = \underset{\x \in S_t}{\arg\max} \hspace{5pt} w(\x_t), \text{ where } S_t = \left\{\x \in \mathcal{X} \middle| u_{f,t}(x) \leq h\right\}.
    \end{align*}
    We use \safeucb\ with $h=0$, $\delta_f = 0.1$, and $\pi_t = t^2 \pi^2/6$.

    Numerically, we achieve better results when the definitions of the confidence bounds are relaxed to follow \eqref{eq:Q} and \eqref{eq:Q_m} rather than their nested counterparts. As such, the following results are obtained using the unnested confidence bounds of $f$ and $m_i$.

    The search progressions of SAFE-SLOPE for the single- and multi-fidelity models is displayed in Fig \ref{fig:safeslope_progression}. Compared to the single-fidelity search, the multi-fidelity search samples fewer points above the safety threshold $h$.
    \begin{figure}
        \centering
        \includegraphics[width = 0.48\textwidth]{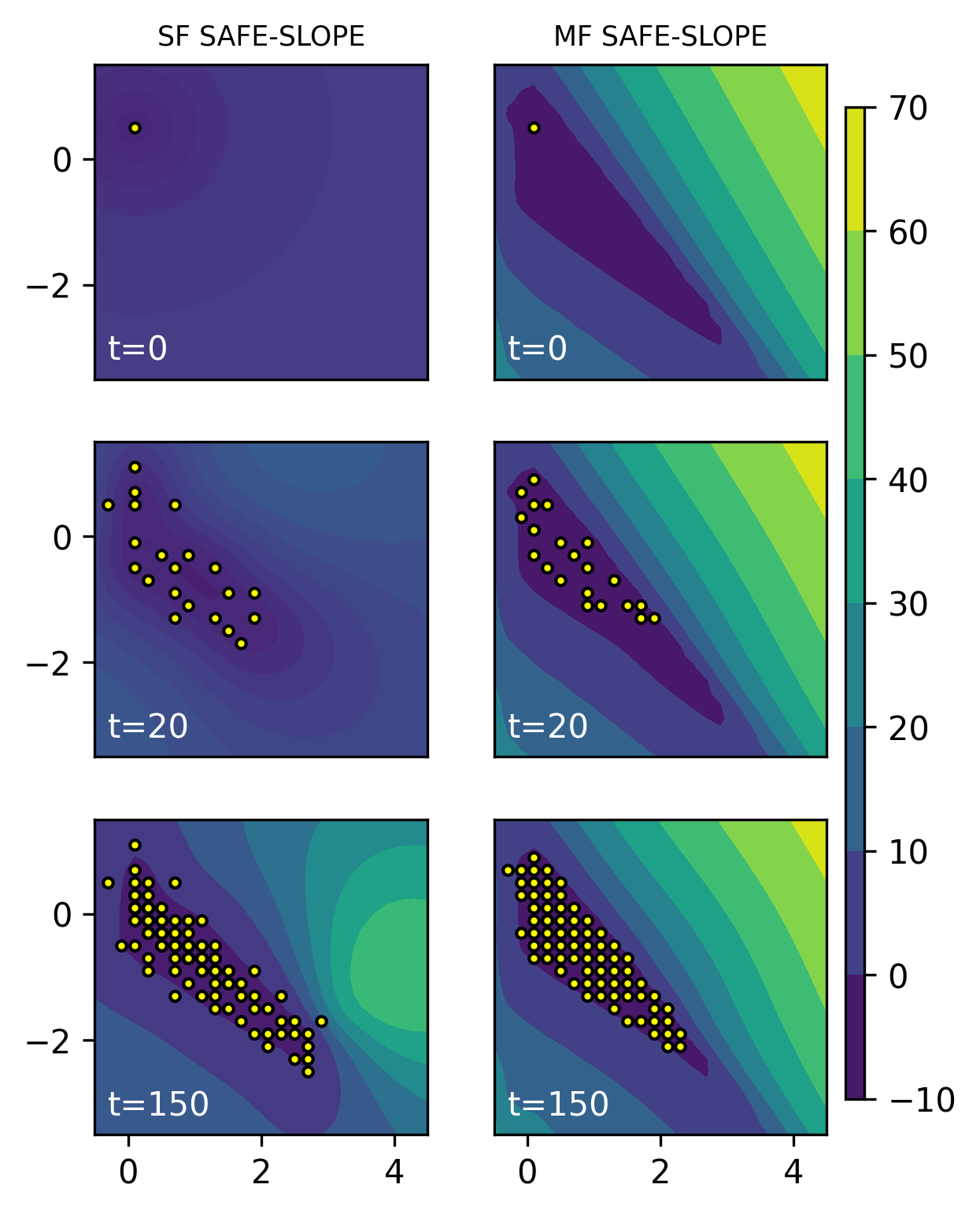}
        \caption{\small{Progression of sampling using SAFE-SLOPE on single- and multi-fidelity models. Darker regions indicate areas of lower LQR costs. By utilizing an AR-1 model, fewer unstable points are tested.}}
        \label{fig:safeslope_progression}
    \end{figure}

    To compare \safeslope\ to \safeucb, we use the cumulative regret up to time $T$, given by $R_T = \sum_{t=0}^T \left(f(\x_t)-f^* \right)$.
    Fig. \ref{fig:cumulative_regret} plots the cumulative regret and cumulative number of unsafe samples over 150 iterations. We see that in this example the multi-fidelity \safeslope\ algorithm performs the best, with a plateau in regret after 25 iterations. In general, \safeslope\ obtains better cumulative regret than \safeucb\ at higher iteration counts. By limiting evaluations to growth or minimizer points, \safeslope\ eliminates non-ideal points in fewer trials. This differs from \safeucb, which seeks to limit uncertainty across all safe points, rather than growth and minimizer points only.
    We also see both algorithms sample fewer unsafe points on MF models, with MF \safeslope\ sampling the fewest unsafe points on average.

    \begin{figure}
        \centering
        \includegraphics[width = 0.4\textwidth]{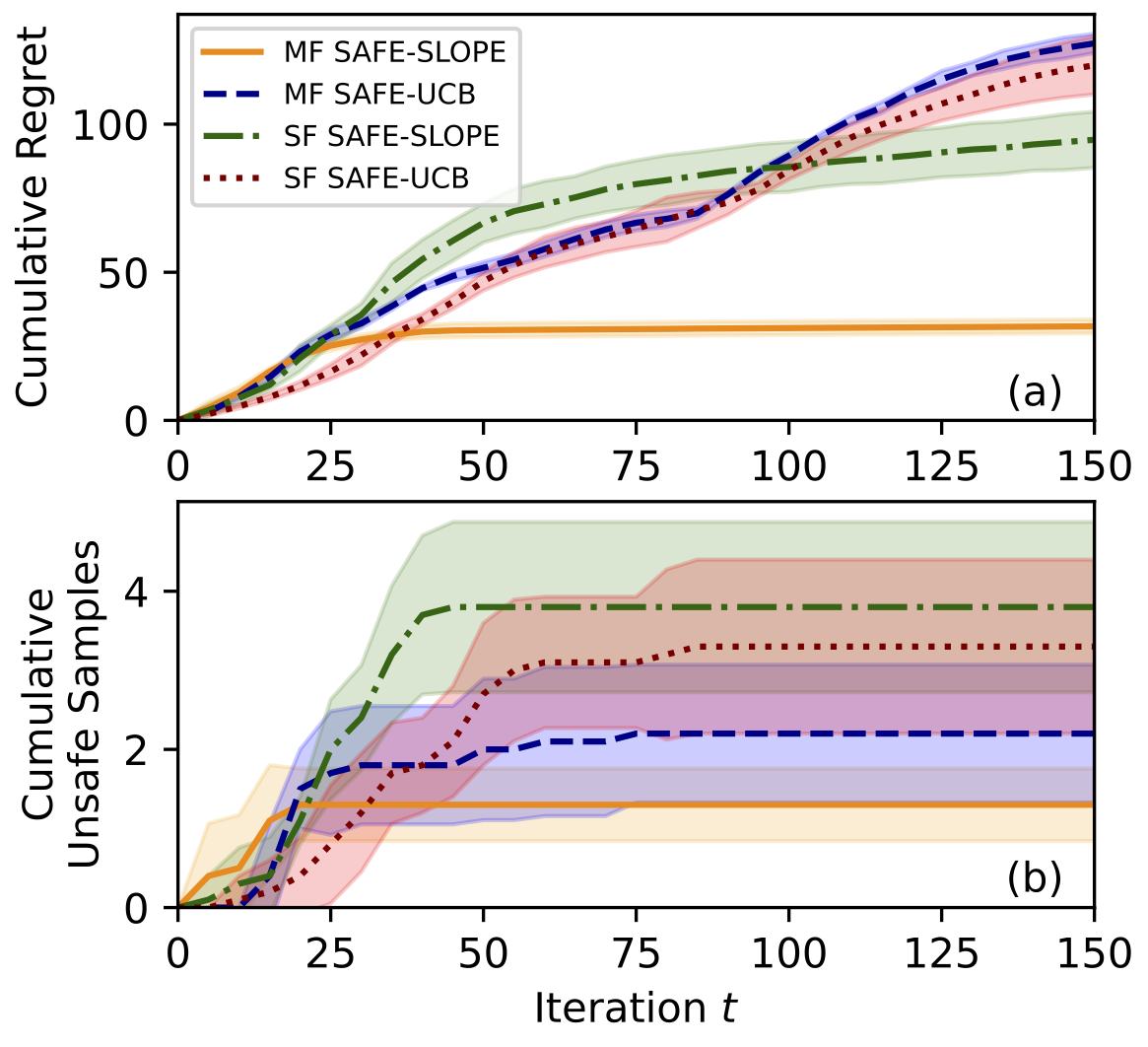}
        \caption{\small{(a) Cumulative regret and (b) the cumulative number of unsafe samples using \safeslope\ and \safeucb, averaged across 10 trials. Error bars indicate one standard deviation.}}
        \label{fig:cumulative_regret}
        \vspace{-16pt}
    \end{figure}

    \section{Discussion and Extensions}
    \subsection{Extension to Continuous Space}
    The presented form of \safeslope\ algorithm is limited to operating on  discretized spaces.  Here, we provide an intuition of how the algorithm may be extended to continuous spaces.

    Instead of relying on slopes, derivative estimation could be used for each point. The derivative of a GP is another GP, and a joint GP can be written to describe both the function and its derivative \cite{rasmussen2006gaussian}. This allows for the calculation of a posterior distribution of the derivatives conditioned on the function values, which could then be used to compute upper confidence bounds on the derivatives.

    The vicinity operator $V(\x)$ could be adjusted to return an $\epsilon$-neighborhood around $\x$. Either the derivative estimate at $\x$ or the maximum derivative estimate in the $\epsilon$-neighborhood could be used for safety.

    While the \safeslope\ algorithm would still be well-defined under these additional assumptions, the main challenge would be to appropriately define the $\beta$ terms and estimate the failure probabilities.

    By following the lines of analysis in the proofs of Theorems 2 and 3 from \cite{srinivas2012information}, an analysis could be conducted to branch this approach to the continuous space given bounds on the derivatives of $f$ or $f$ being a sample from an Reproducing Kernel Hilbert Space. By completing this analysis, a form of the first point of Theorem 3.1 would be extended to continuous spaces.
    This analysis is expected to be significantly more involved than the present discrete space counterpart, which happens to be more intuitive at the expense of a small convergence due to the discretization.

    In fact, the effects of discretization can be bounded by examining the point $\x^\dagger$ nearest point to $\x^*$. Specifically, if $\mathcal{X}$ is a uniform $n$-dimensional hypercubic grid with $r^n$ points and spacing $\Delta x$ between points in a given direction, then the distance $d(\x^\dagger,\x^*)$ is upper-bounded by $\Delta x \sqrt{n}/2$.
    Given a Lipschitz constant $L$, we bound
    \[
    f(\x^\dagger)-f(\x^*) \leq d(\x^\dagger,\x^*) \cdot L \leq  \frac{\Delta x \sqrt{n}}{2}L := \epsilon^\dagger.
    \]

    Since both \safeopt\ and \safeslope\ rely on the cardinality $|\Bar{R}_{0}(S_0)|$ for the convergence time $t^*$ guarantees, the second point of Theorem \ref{thm:ss} would not easily translate to continuous spaces.
    However, a fundamental limitation of using GPs is that their practical application requires discretization either in formulation or in implementation. This challenge is not unique to our algorithm, but applies to all work involving GPs. Setups formulated in continuous space typically resort to random sampling in order to determine extrema \cite{nowak2005relaxation,zabinsky2009random}. As a result, the discretized formulation of the presented algorithm remains a viable option for sampling GPs.

    \subsection{Application to Disturbance Models}
    In this paper, our use of the linear auto-regressive approach is motivated by a scenario in which we possess a true unknown system and a close approximation of it. The AR-1 model may also be applied to modeling disturbances on LTI systems.
    Here, we describe two major classes of disturbances for which our method could be applied.

    \begin{enumerate}
        \item \textbf{Linear Disturbance}: Consider an unknown deterministic disturbance that is linear in $\bm{z}$ and $u$. This typically arises when modeling drag forces on a robot caused by its environment. Then there exists an $A_d, B_d$ such that
        \[
        \bm{z}_{j+1} = [\hat{A} + A_d]\bm{z}_j + [\hat{B} + B_d]u.
        \]
        In this case, our model directly applies to this setup. The low-fidelity model corresponds to the disturbance-free dynamics $(\hat{A},\hat{B})$ while the high-fidelity model corresponds to the disturbance-impacted dynamics $(A,B)=(\hat{A} + A_d, \hat{B} + B_d)$. We assume $(\hat{A},\hat{B})$ to be known through some type of modeling or perfect-environment testing while the true system $(A,B)$ is a black-box.

        \item \textbf{Additive Disturbance}: Consider a known linear time-invariant system $(A,B)$ with additive disturbance $\bm{d}$ and the evolution
        \[
        \bm{z}_{j+1} = A\bm{z}_j + Bu_j + \bm{d}_j.
        \]
        Assuming perfect state feedback, the past values of the disturbance can be computed at all times. Thus, we consider control inputs of the form $u_j = -(K_H \bm{z}_j + K_d \bm{d}_{j-1})$ with the goal of finding a $K^* = [K_H^*\ K_d^*]$ which minimizes the quadratic cost of the system.
        We examine two possible configurations for how the AR-1 model may handle an additive disturbance.

        \begin{enumerate}
            \item \textbf{Known Additive Disturbance Model/Evolution}:
            In this case, we assume the evolution of $\bm{d}$ is known. Our model directly extends to this setup with an increase in the dimension of the search space. (We now search for $K_d$ in addition to $K_H$.) The low- and high- fidelities model the LQR cost of the disturbance-impacted systems as the gains $K$ change. As before, the low-fidelity model corresponds to an approximation $(\hat{A},\hat{B})$ while the high-fidelity model corresponds to the true system $(A,B)$.

            \item \textbf{Unknown Additive Disturbance}:
            In this case, we assume that $(A,B)$ is known but the disturbance $\bm{d}$ is unknown (but deterministic).
            Here, the low-fidelity $f_L(K)$ models the quadratic cost of the disturbance-free $(A,B)$. As such, $f_L$ is independent of $K_d$. The high fidelity model takes the form
            \[
            f_H(K) = \bm{\rho} f_L(K) + \delta(K),
            \]
            where $\delta$ accounts for the effects of disturbance and the control gain $K_d$. Essentially, the low-fidelity model acts as a prior in the directions of $K_H$ but does not inform the GP in the directions of $K_d$.

            The error GP $\delta(K)$ is well-suited to model the additive disturbance. By substituting our choice of $u$ into the system, we obtain
            \[
            \bm{z}_{j+1} = (A-B K_H)\bm{z}_j + (I-B K_d)\bm{d}_j.
            \]
            For a quadratic cost, the total cost of the disturbed system will be the sum of the cost of the undisturbed system plus an additional term contributed by the disturbance.
        \end{enumerate}
        Note, for the additive disturbance, if $\bm{d}$ does not diminish to 0, a discount factor may need to be applied the cost functions in order to prevent an infinite cost. Alternatively, one may assume a finite-energy disturbance with bounded $\ell_2$-norm.
    \end{enumerate}

    \section{Conclusion}
    We propose \safeslope, a safe exploration algorithm that leverages a function's posterior mean to predict its slopes. We preserve the safety result from \safeopt\ with a reduction in probability. By applying \safeslope\ to an AR-1 GP, we show the search time for an optimal point corresponds to the quality of the low-fidelity approximation. Finally, we examine \safeslope's performance by comparing it to a naive approach applied to single- and multi-fidelity models. We observe that applying \safeslope\ to an MF-GP achieves lower cumulative regret while sampling fewer unsafe points.

    Future research includes applying \safeslope\ to nonlinear systems, LTI systems with disturbances, or experimental robotic applications.
    Another direction is designing a search algorithm which can select either fidelity for evaluation.

    \bibliographystyle{IEEEtran}
    \bibliography{safeslope}

    \appendix
    The following steps compose the proof of Theorem \ref{thm:ss}.
    We start by restating the upper confidence bound from Lemma 5.1 in \cite{srinivas2012information}.
    \begin{lemma}[UCB Bound] \label{lemma:ucb}
        Let $f$ be a function sampled from a GP. For all $t \geq 1$ and $\beta_{f,t} = 2 \log(|\mathcal{X}|\pi_t/\delta_f)$ with probability $1-\delta_f$,
        \[
        \text{abs}[f(\x) - \mu_{f,t}(\x)] \leq \beta^{1/2}_{f,t}\sigma_{f,t}(\x),  \hspace{5pt}\forall \x \in \mathcal{X}. \tag*{\qedsymbol}
        \]
    \end{lemma}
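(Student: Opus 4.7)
The plan is to reproduce the standard argument behind Lemma 5.1 of \cite{srinivas2012information}, which is a Gaussian tail bound combined with two union bounds. I would assume the usual posterior characterization: conditioned on the observations $\bm{X}_t, \bm{Y}_t$ available at step $t$, the value $f(\x)$ is a Gaussian random variable with mean $\mu_{f,t}(\x)$ and variance $\sigma_{f,t}^2(\x)$. This is immediate from the GP prior and the conditioning formulas \eqref{eq:gp_mean}--\eqref{eq:gp_std}.

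First, I would recall the one-sided Gaussian tail bound: for $Z \sim \mathcal{N}(0,1)$ and $c > 0$, $\Pr(|Z| > c) \leq e^{-c^2/2}$. Applied to the standardized quantity $(f(\x) - \mu_{f,t}(\x))/\sigma_{f,t}(\x)$ with $c = \beta_{f,t}^{1/2}$, this gives, for each fixed $\x$ and $t$,
\[
\Pr\!\left(\text{abs}[f(\x)-\mu_{f,t}(\x)] > \beta_{f,t}^{1/2}\sigma_{f,t}(\x)\right) \leq e^{-\beta_{f,t}/2}.
\]
Plugging in $\beta_{f,t} = 2\log(|\mathcal{X}|\pi_t/\delta_f)$, the right-hand side becomes $\delta_f/(|\mathcal{X}|\pi_t)$.

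Next, I would take a union bound over all $\x \in \mathcal{X}$ at a fixed $t$, which costs a factor $|\mathcal{X}|$ and yields failure probability at most $\delta_f/\pi_t$ at step $t$. A second union bound over $t \geq 1$ then gives total failure probability at most $\sum_{t\geq 1}\delta_f/\pi_t = \delta_f$, using the hypothesis $\sum_{t\geq 1}\pi_t^{-1}=1$. The complementary event is exactly the statement of the lemma, so with probability at least $1-\delta_f$ the inequality holds simultaneously for all $t \geq 1$ and all $\x \in \mathcal{X}$.

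There is no real obstacle here since $\mathcal{X}$ is finite and the bound is just the well-known concentration of a Gaussian around its mean; the only care needed is the choice of $\beta_{f,t}$ so that the two union bounds (over $\mathcal{X}$ and over $t$) telescope correctly with the series $\sum \pi_t^{-1}=1$. Because the result is quoted verbatim from \cite{srinivas2012information}, the cleanest presentation is to give the three-line argument above and cite the original reference for completeness.
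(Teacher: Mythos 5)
Your proposal is correct and is exactly the standard argument: the paper itself gives no proof of this lemma, simply restating Lemma 5.1 of \cite{srinivas2012information}, and the proof there is precisely your chain of posterior Gaussianity, the tail bound $e^{-\beta_{f,t}/2}$, and the two union bounds over $\mathcal{X}$ and over $t$ using $\sum_{t\geq1}\pi_t^{-1}=1$. No gaps; the only point worth making explicit is that the tail bound is applied conditionally on the (adaptively chosen) data and then integrated out, which is the same step taken in the cited reference.
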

    \smallskip
    Next, we show that even though multiple GPs are used to model the slopes, the UCB bound still applies.

    \begin{lemma} \label{lemma:ucb_m}
        Suppose we have $n$ GPs $m_i$ over $\mathcal{X}$.
        For all $t \geq 1$ and $\beta_{m,t} = 2 \log(|\mathcal{X}|n\pi_t/\delta_m)$ with probability at least $1-\delta_m$, the following holds for all $i=1,\dots,n$:
        \[
        \text{abs}[f_i(\x) - \mu_{m_i,t}(\x)] \leq \beta^{1/2}_{m,t}\sigma_{m_i,t}(\x),  \hspace{5pt}\forall \x \in \mathcal{X}.
        \]
    \end{lemma}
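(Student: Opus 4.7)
The plan is to reduce Lemma~\ref{lemma:ucb_m} to Lemma~\ref{lemma:ucb} by a union bound over the $n$ slope processes. First I would observe that each $m_i$ is itself a valid Gaussian process on $\mathcal{X}$: since $m_i = W_i \cdot f$, the vector of slope values is a linear transformation of the jointly Gaussian values of $f$, so it is jointly Gaussian with mean $W_i \boldsymbol{\mu}_{f,t}$ and covariance $W_i \boldsymbol{k}_{f,t} W_i^T$, exactly as used in Algorithm~\ref{alg:SS}. In particular, the hypotheses of Lemma~\ref{lemma:ucb} are satisfied by each $m_i$ individually.

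Next I would apply Lemma~\ref{lemma:ucb} to each of the $n$ processes $m_i$ with failure probability $\delta_m/n$ in place of $\delta_f$. Unwinding the definition of the exploration parameter, this substitution produces
\[
\beta = 2\log\!\left(\frac{|\mathcal{X}|\pi_t}{\delta_m/n}\right) = 2\log\!\left(\frac{|\mathcal{X}|n\pi_t}{\delta_m}\right) = \beta_{m,t},
\]
which is precisely the parameter used in the statement of the lemma. Thus for each fixed $i$, with probability at least $1-\delta_m/n$, the bound $\text{abs}[f_i(\x) - \mu_{m_i,t}(\x)] \leq \beta_{m,t}^{1/2} \sigma_{m_i,t}(\x)$ holds simultaneously for all $t \geq 1$ and all $\x \in \mathcal{X}$.

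Finally I would take a union bound over $i = 1, \dots, n$. The probability that the UCB bound fails for at least one index $i$ is at most $n \cdot (\delta_m/n) = \delta_m$, so with probability at least $1-\delta_m$ the stated inequality holds simultaneously for every $i$, every $t$, and every $\x$. Importantly, no independence among the $m_i$ is needed here, which is good because they are all derived from the single underlying GP $f$ and are therefore highly correlated.

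There is no real obstacle beyond bookkeeping; the only subtle point worth double-checking is that Lemma~\ref{lemma:ucb} already absorbs the union over $\mathcal{X}$ and $t$ through the $|\mathcal{X}|$ and $\pi_t$ factors (with $\sum_{t\geq 1} \pi_t^{-1} = 1$), so the extra factor of $n$ inside the logarithm is exactly what is needed to additionally absorb the union over the $n$ slope GPs, without having to revisit the original Gaussian tail bound.
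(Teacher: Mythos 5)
Your proposal is correct and follows essentially the same route as the paper: both reduce the claim to the single-GP confidence bound of Lemma~\ref{lemma:ucb} and absorb the extra union over the $n$ slope processes into the factor of $n$ inside $\beta_{m,t}$. The only cosmetic difference is ordering --- you invoke Lemma~\ref{lemma:ucb} with $\delta_m/n$ per process and then union over $i$, while the paper unions the raw Gaussian tail events over $i$ and $\x$ first and then finishes as in Lemma~5.1 of \cite{srinivas2012information} --- which yields the identical bound.
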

    \medskip
    \begin{proof}
        Let $A_i$ be the event
        \[
        A_i = \{\text{abs}[m_i(\x) - \mu_{m_i,t}(\x)] \leq \beta^{1/2}_{m,t}\sigma_{m_i,t}(\x) \forall \x \in \mathcal{X}_i  \}.
        \]
        Then, $P[A_i^c] \leq |\mathcal{X}|\cdot e^{-\beta_{m,t}/2}$.
        Further,
        \begin{align*}
            P[(\cap_i A_i)^c] = P[\cup_i A_i^c] &\leq \sum_i P[A_i^c]\\
            &\leq \sum_i |\mathcal{X}|\cdot e^{-\beta_{m,t}/2}\\
            &\leq n|\mathcal{X}| e^{-\beta_{m,t}/2}.
        \end{align*}
        By applying DeMorgan's laws and the union bound, we obtain $P[\cap_i A_i] \geq 1 - |\mathcal{X}|n e^{-\beta_{m,t}/2}.$
        The remainder of the proof is identical to the proof of Lemma 5.1  in \cite{srinivas2012information}.
    \end{proof}

    \medskip
    We now establish properties of sets used in \safeslope.
    \begin{lemma} \label{lemma:sui_lem2} The following properties hold for all $t \geq 1$.
        \begin{enumerate}[(i)]
            \item $S_{t+1} \supseteq S_t \supseteq S_0$. \label{lemma_enum:S}
            \item $S \subseteq D \implies R_{\epsilon, \hat{u}_t}(S) \subseteq R_{\epsilon, \hat{u}_t}(D)$.
            \item $S \subseteq D \implies \bar{R}_{\epsilon}(S) \subseteq \bar{R}_{\epsilon}(D)$.
        \end{enumerate}
    \end{lemma}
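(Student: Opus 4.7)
The plan is to exploit the monotone structure embedded in each of the three objects: the nested confidence intervals $C_{f,t}$, the running-minimum slope bound $\hat{u}_{m_i,t}$, and the reachability operator $R_{\epsilon,\hat{u}_t}$. All three claims reduce to purely set-theoretic monotonicity arguments and carry no probabilistic content; they should be provable without reinvoking Lemmas~\ref{lemma:ucb} or \ref{lemma:ucb_m}.

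For (i), I would first establish the pointwise monotonicities $u_{f,t+1}(\x) \leq u_{f,t}(\x)$ and $\hat{u}_{m_i,t+1}(\x,\x') \leq \hat{u}_{m_i,t}(\x,\x')$. The first follows from $C_{f,t+1} = C_{f,t} \cap Q_{f,t+1} \subseteq C_{f,t}$ together with \eqref{eq:u_ell}; the second is immediate from the min-construction in \eqref{eq:u_hat}. Combining these gives $s_{t+1}(\x,\x') \leq s_t(\x,\x')$, so any $\x'$ admitted to $S_t$ through some seed $\x \in S_{t-1}$ is again admitted to $S_{t+1}$ through the same seed, which itself remains available by the inductive hypothesis $S_{t-1} \subseteq S_t$. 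The base case $S_0 \subseteq S_1$ uses the standing convention that the union in \eqref{eq:S_t} retains each seed point itself (equivalently, one evaluates $s_1(\x,\x)=u_{f,1}(\x)$ with $d(\x,\x)=0$ for $\x \in S_0$, and this is bounded by $h$ since $S_0$ is assumed safe).

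Claim (ii) is a direct unpacking of the definition of $R_{\epsilon,\hat{u}_t}$: any element of $R_{\epsilon,\hat{u}_t}(S)$ is either in $S \subseteq D \subseteq R_{\epsilon,\hat{u}_t}(D)$, or is witnessed by some $\x \in S$ and direction $i$ satisfying $f(\x) + \hat{u}_{m_i,t}(\x,\x') d(\x,\x') + \epsilon \leq h$; since $\x \in S \subseteq D$, the same witness $(\x,i)$ also lies in $D$, so the element sits in $R_{\epsilon,\hat{u}_t}(D)$. For (iii), I would iterate (ii): assuming $S \subseteq D$, a short induction on the nesting depth of $R_\epsilon^T$ gives $R_\epsilon^T(S) \subseteq R_\epsilon^T(D)$ for every $T$, and passing to the limit yields $\bar{R}_\epsilon(S) \subseteq \bar{R}_\epsilon(D)$.

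The only mildly delicate point is the base case of (i): whether $\x \in V_i(\x)$ is admissible under the vicinity definition, or whether one must instead cite the convention that $S_{t-1}$ is retained in the union defining $S_t$. This is a definitional ambiguity rather than a mathematical obstruction, and it resolves cleanly under the interpretation consistent with $S_0$ seeding a growing sequence of safe sets. Beyond this, the argument is a straightforward induction, and the remaining set inclusions follow by mechanical unfolding of the definitions.
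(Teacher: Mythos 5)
Your proposal is correct and follows essentially the same route as the paper's proof: part (i) rests on the monotonicity of $u_{f,t}$ (nested confidence intervals) and $\hat{u}_{m_i,t}$ (the running minimum in \eqref{eq:u_hat}), and parts (ii)--(iii) follow by unfolding the definition of $R_{\epsilon,\hat{u}_t}$ and iterating. You are in fact somewhat more careful than the paper, which simply cites Lemma 2 of \cite{sui2015safe}, in that you make the induction, the seed-retention convention in \eqref{eq:S_t}, and the $\x \in S$ branch of (ii) explicit.
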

    \begin{proof}
        (i) From Lemma 2 of \cite{sui2015safe}, we know that (\ref{lemma_enum:S}) holds when the Lipschitz constant $L$ of $f(\cdot)$ is known. By replacing $L$ with $\hat{u}_{m_i, t}(x,x')$, it follows that for every $t \geq 1$ and given any $\x$, $\x'$,
        \begin{align*}
            &u_{f,t+1}(\x) + \hat{u}_{m_i,t+1}(\x,\x')\cdot d(\x,\x') \\
            &\leq u_{f,t}(\x) + \hat{u}_{m_i,t}(\x,\x')\cdot d(\x,\x') \leq h.
        \end{align*}
        From the definition of $u_{f,t}$ and $\hat{u}_{m_i,t}$, it follows that these bounds are non-increasing over time, for all $\x$. The second inequality follows from \eqref{eq:S_t}.
        Therefore, $S_{t+1} \supseteq S_t \supseteq S_0$.

        (ii) Let $\x \in R_{\epsilon, \hat{u}_t}(S).$ By definition of the reachability set, $\exists \x' \in S$ such that $f(\x')+\hat{u}_{m_i, t}\cdot d(\x,\x') + \epsilon \leq h$. As $S \subseteq D$, this implies $ \x' \in D$, which implies $\x \in R_{\epsilon, \hat{u}_t}$.

        (iii) This directly follows from repeatedly applying part (ii). Each reachability step is a union of two subsets of $\mathcal{X}$, so the union is bounded by $\mathcal{X}$ and the limit exists.
    \end{proof}

    \begin{remark}
        Note, by definition, the confidence bounds are always nested. Because the safe-set relies on these confidence bounds and not the true values, by definition, safe sets are also always nested. However, with probability less than $\delta_f$, $C_{f,t-1}(\x) \cap Q_{f,t}(\x) = \varnothing$, causing an poorly defined problem. While the following results theoretically hold with probability 1, the problem only remains defined with probability $1-\delta_f$.
    \end{remark}

    Next, we show that the width $w(x)$ is bounded by some $\epsilon >0$ using upper confidence bounds. Unlike \cite{sui2015safe, srinivas2012information}, we consider a non-unit variance for the kernel function $k$.
    \begin{lemma} \label{lemma:sui_cor2}
        Given a kernel with variance $v^2$ and measurement noise $\xi^2$, for each $t\geq 1$, define $T_t$ as the smallest positive integer satisfying $\frac{T_t}{\beta_{f,t+T_t}\gamma_{t+T_t}}\geq \frac{C_1}{\epsilon^2}$, where $C_1 = 8 v^2 /\log(1+v^2\xi^{-2})$. If $S_{t+T_t} = S_t$, then for any $\x \in G_{t+T_t} \cup M_{t+T_t}$, it holds that $w_{t+T_t}(\x) \leq \epsilon$. \hfill \qedsymbol
\end{lemma}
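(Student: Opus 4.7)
The plan is to adapt the convergence analysis of \safeopt\ \cite{sui2015safe} to \safeslope\ by (i) replacing the Lipschitz constant $L$ with the data-driven slope bound $\hat{u}_{m_i,t}$ controlled via Lemma \ref{lemma:ucb_m}, and (ii) handling a non-unit kernel variance $v^2$. I would structure the argument as a three-step chain that funnels the confidence widths first through the UCB bound, then through the information-gain bound, and finally through the argmax selection rule that defines $\x_\tau$.

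First, I would show that $w_\tau(\x) \leq 2\beta_{f,\tau}^{1/2}\sigma_{f,\tau-1}(\x)$ for every $\x$ and every $\tau$; this is immediate from $C_{f,\tau}(\x) \subseteq Q_{f,\tau}(\x)$ and the definition of $Q_{f,\tau}$. Next, I would convert the information-gain bound into a bound on the cumulative posterior variance at the sampled points. Starting from the telescoping identity $\tfrac{1}{2}\sum_{\tau} \log\!\bigl(1+\xi^{-2}\sigma_{f,\tau-1}^2(\x_\tau)\bigr) \leq \gamma_{t+T_t}$ of \cite[Eq.~(8)]{srinivas2012information}, and applying the elementary inequality $s \leq \tfrac{v^2}{\log(1+v^2\xi^{-2})}\log(1+\xi^{-2}s)$ valid for $s \in [0,v^2]$ (a consequence of the monotonicity of $x\mapsto \log(1+x)/x$), I obtain $\sum_{\tau=t+1}^{t+T_t} 4\beta_{f,\tau}\sigma_{f,\tau-1}^2(\x_\tau) \leq C_1\,\beta_{f,t+T_t}\gamma_{t+T_t}$ with $C_1 = 8v^2/\log(1+v^2\xi^{-2})$, using $\sigma_{f,\tau-1}^2 \leq v^2$ and monotonicity of $\beta_{f,\tau}$ in $\tau$. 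This is the step where the non-unit-variance kernel enters the constant $C_1$.

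Third, I would argue that under the hypothesis $S_{t+T_t}=S_t$ the safe set $S_\tau$ is constant on $[t,t+T_t]$ (by Lemma \ref{lemma:sui_lem2}(i)), which makes the sets $M_\tau \cup G_\tau$ non-increasing in $\tau$ on this window. Combined with the argmax rule $w_\tau(\x) \leq w_\tau(\x_\tau)$ for $\x \in M_\tau \cup G_\tau$ and the nestedness $w_{t+T_t}(\x) \leq w_\tau(\x)$, this yields $w_{t+T_t}(\x) \leq w_\tau(\x_\tau)$ for every $\tau$ in the window. Squaring, averaging over $\tau$, and chaining with Step~2 gives $T_t\,w_{t+T_t}^2(\x) \leq \sum_\tau w_\tau^2(\x_\tau) \leq C_1\,\beta_{f,t+T_t}\gamma_{t+T_t}$, and the defining inequality for $T_t$ then delivers $w_{t+T_t}(\x) \leq \epsilon$.

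The main obstacle I anticipate is verifying the set-monotonicity of $M_\tau \cup G_\tau$ in Step 3. For $M_\tau$ it is clean, since the non-decreasing $\ell_{f,\tau}$ and the non-increasing $\min_{\x' \in S_\tau} u_{f,\tau}(\x')$ both tighten the membership condition. For $G_\tau$ under \safeslope\ it is more delicate than in \safeopt, because the growth criterion $\ell_{f,\tau}(\x) + \hat{u}_{m_i,\tau}(\x,\x')\,d(\x,\x') \leq h$ pairs a non-decreasing $\ell_{f,\tau}$ with a non-increasing $\hat{u}_{m_i,\tau}$, rather than a fixed Lipschitz constant $L$. I would resolve this either by a careful case analysis exploiting that Step~1 forces the slope-GP widths, and hence $\hat{u}_{m_i,\tau}$, to stabilize whenever the safe set is frozen and no new slope-reducing samples are entering, or by replacing the uniform-containment argument with a weaker per-$\tau$ bound: take the $\tau^*$ attaining $\min_\tau w_\tau(\x_\tau)$, and combine only the single inclusion $M_{t+T_t}\cup G_{t+T_t} \subseteq M_{\tau^*}\cup G_{\tau^*}$ with the nestedness of $w_\tau$ in $\tau$.
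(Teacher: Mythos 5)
Your proposal follows essentially the same route as the paper's proof: bound $w_\tau(\x_\tau) \leq 2\beta_{f,\tau}^{1/2}\sigma_{f,\tau-1}(\x_\tau)$, convert the summed squared widths into $C_1\,\beta_{f,t+T_t}\gamma_{t+T_t}$ via the information-gain bound with the non-unit-variance constant $C_1 = 8v^2/\log(1+v^2\xi^{-2})$, and close with the argmax selection rule and the defining condition on $T_t$. The Step-3 obstacle you flag (that $G_\tau$ need not shrink under \safeslope\ because $\hat{u}_{m_i,\tau}$ is non-increasing, unlike the fixed $L$ in \safeopt) is genuine, but the paper does not resolve it either---its proof simply invokes ``nestedness of $w_t$'' and the corresponding steps of Lemma 5 in \cite{sui2015safe}---so you are, if anything, more careful than the published argument, though neither of your proposed fixes is yet airtight (the first rests on an unproved stabilization of $\hat{u}_{m_i,\tau}$ while sampling continues inside the frozen safe set, and the second still requires the inclusion $M_{t+T_t}\cup G_{t+T_t} \subseteq M_{\tau^*}\cup G_{\tau^*}$ for the particular $\tau^*$ attaining the minimum).
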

\medskip

The proof follows the same steps as Lemma 5 in \cite{sui2015safe} and Lemma 5.4 in \cite{srinivas2012information} with the difference of a non-unit kernel variance. 
\medskip
\begin{proof}
    By definition, $w_t(\x_t) \leq 2\beta_t\sigma_{t-1}(\x_t)$. Similar to the steps in the proof of Lemma 5.4 in \cite{srinivas2012information},
    \begin{align*}
        w_t^2(\x_t)
        &\leq 4\beta_{f,t}^2\sigma_{t-1}^2(\x_t)\\
        &= 4\beta_{f,t}^2 \xi^{2}(\xi^{-2}\sigma_{t-1}^2(\x_t))\\
        &\leq 4\beta_{f,t}^2\xi^{2}C_2\log(1+\xi^{-2}\sigma_{t-1}^2(\x_t)),
    \end{align*}
    where $C_2 = (v^2\xi^{-2})/\log(1+v^2\xi^{-2}) \geq 1$.
    We leverage the inequalities $\xi^{-2}\sigma_{t-1}^2(\x_t) \leq\xi^{-2}k(\x_t,\x_t) \leq \xi^{-2}v^2$ and $s^2\leq C_2 \log(1+s^2)$ for $s^2 \in [0,\xi^{-2}]$ to obtain $C_2$. It is seen that $C_1 = 8 \xi^2 C_2$.

    Then
    \begin{align*}
        (t-t_0)w_t^2(\x_t) &\leq \sum_{\tau=t_0}^t w_\tau^2(\x_\tau) \tag*{by  nestedness of $w_t$}\\
        &\leq \frac{1}{2}C_1 \sum_{\tau=t_0}^t \beta_{f,\tau}^2\log(1+\xi^{-2}\sigma_{\tau-1}^2(\x_\tau))\\
        &\leq C_1\beta_{f,t}^2\gamma_t.
    \end{align*}
    The second inequality follows from Lemma \ref{lemma:sui_lem2}. The third inequality follows from $\beta_{f,t} > \beta_{f,\tau}$ for any $t > \tau$ and Lemma 5.3 of \cite{srinivas2012information}.
    As a result,
    \begin{align*}
        w_{t}(\x_t) \leq \sqrt{\frac{C_1 \beta_{f,t} \gamma_t}{t-t_0}}.
    \end{align*}
    Additionally, using the proposed condition on $T_t$, for any time $t+T_t$, $w_{t+T_t}(\x_{t+T_t}) \leq \epsilon$.
\end{proof}

In the following lemmas, we assume $C_1$ and $T_t$ are defined as in Lemma \ref{lemma:sui_cor2}.
We next establish guarantees on how $S_t$ evolves with time using the reachability operator.
\begin{lemma} \label{lemma:sui_lem7}
    For any $t \geq 1$, if $\bar{R}_{\epsilon}(S_0) \backslash S_t \neq \varnothing$, then with probability at least $1-\delta_f$,
    \begin{align} \label{eq:lem_sui7}
        S_{t+T_t} \supsetneq S_t.
    \end{align}
\end{lemma}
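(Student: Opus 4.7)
The plan is to argue by contradiction, adapting the proof of Lemma~7 in \cite{sui2015safe} by replacing the known Lipschitz constant with the posterior-driven slope bound $\hat{u}_{m_i,t}$. Suppose $S_{t+T_t}=S_t$. By the nestedness property in Lemma \ref{lemma:sui_lem2}(i), the entire window collapses into $S_t = S_{t+1} = \cdots = S_{t+T_t}$, so the hypothesis of Lemma \ref{lemma:sui_cor2} is met, yielding the uniform width bound $w_{t+T_t}(\x)\le\epsilon$ for every $\x\in G_{t+T_t}\cup M_{t+T_t}$.

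The second step is to extract a boundary witness from the reachability hypothesis. Because $S_0\subseteq S_t$, Lemma \ref{lemma:sui_lem2}(iii) gives $\bar{R}_\epsilon(S_0)\subseteq \bar{R}_\epsilon(S_t)$, so $\bar{R}_\epsilon(S_t)\setminus S_t\neq \varnothing$. Since the closure is the monotone limit of iterated one-step reachability operators and the slope bounds $\hat{u}_{m_i,t}$ are non-increasing in $t$ by \eqref{eq:u_hat}, $S_t$ cannot be invariant under the one-step reachability operator evaluated at time $t+T_t$; otherwise repeated application would force the entire limit $\bar{R}_\epsilon(S_t)$ back into $S_t$. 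Hence there exist $\tilde{\x}\in S_t$, an axis $i\in\{1,\dots,n\}$, and a neighbor $\x'\in V_i(\tilde{\x})\setminus S_t$ with
\[
f(\tilde{\x}) + \hat{u}_{m_i,t+T_t}(\tilde{\x},\x')\, d(\tilde{\x},\x') + \epsilon \leq h.
\]

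The third step transfers this inequality into the language of the algorithmic safe-set update \eqref{eq:S_t}. By Lemma \ref{lemma:ucb}, with probability at least $1-\delta_f$, $\ell_{f,t+T_t}(\tilde{\x}) \le f(\tilde{\x}) \le u_{f,t+T_t}(\tilde{\x})$. Substituting the lower bound into the display above gives $\ell_{f,t+T_t}(\tilde{\x}) + \hat{u}_{m_i,t+T_t}(\tilde{\x},\x')\, d(\tilde{\x},\x')\le h-\epsilon$, which shows that $g_{t+T_t}(\tilde{\x})>0$ and hence $\tilde{\x}\in G_{t+T_t}$. The width bound from the first step then gives $w_{t+T_t}(\tilde{\x}) \le \epsilon$, so $u_{f,t+T_t}(\tilde{\x}) \le \ell_{f,t+T_t}(\tilde{\x}) + \epsilon \le f(\tilde{\x}) + \epsilon$. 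Plugging this back in,
\[
u_{f,t+T_t}(\tilde{\x}) + \hat{u}_{m_i,t+T_t}(\tilde{\x},\x')\, d(\tilde{\x},\x') \leq h,
\]
so by \eqref{eq:S_t} we must have $\x' \in S_{t+T_t}$. Because $\x'\notin S_t$, this contradicts the standing assumption $S_{t+T_t}=S_t$ and establishes the strict inclusion $S_{t+T_t}\supsetneq S_t$ on the probability-$(1-\delta_f)$ event on which the UCB bound holds.

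The main obstacle is the boundary-witness step. The closure $\bar{R}_\epsilon(S_0)$ is defined through a composition of single-step reachability operators that use the slope bounds $\hat{u}_{m_i,0},\hat{u}_{m_i,1},\dots$, which tighten over algorithmic time. Extracting a direct neighbor of $S_t$ whose reachability certificate remains valid when evaluated with $\hat{u}_{m_i,t+T_t}$ requires carefully invoking the monotonicity property \eqref{eq:u_hat}; the rest of the argument is then a bookkeeping exercise patterned on \cite[Lemma~7]{sui2015safe}.
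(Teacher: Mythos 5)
Your proof is correct and follows essentially the same route as the paper's: contradiction via $S_{t+T_t}=S_t$, extraction of a reachability witness pair, showing the witness lies in $G_{t+T_t}$ via the lower confidence bound, and then combining the width bound of Lemma~\ref{lemma:sui_cor2} with the safe-set update \eqref{eq:S_t} to force $\x' \in S_{t+T_t}$. The only cosmetic difference is that the paper extracts the witness with the time-$t$ slope bound $\hat{u}_{m_i,t}$ and descends to $\hat{u}_{m_i,t+T_t}$ via the monotonicity in \eqref{eq:u_hat}, whereas you state the witness directly at time $t+T_t$; both versions share the same (paper-level) informality in passing from the time-varying closure $\bar{R}_\epsilon(S_t)$ to a single-step witness.
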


\begin{proof}
    We prove this by contradiction. First, for any $t \geq 1$, if $\bar{R}_{\epsilon}(S_t) \backslash S_t \neq \varnothing$, then $R_{\epsilon, \hat{u}_t}(S_t) \backslash S_t \neq \varnothing$ (by following steps identical to those in the proof of Lemma 6 in \cite{sui2015safe}). By the definition of $R_{\epsilon, \hat{u}_t}(S_t)$, we know that (a) $\exists \x' \in R_{\epsilon, \hat{u}_t}(S_t) \backslash S_t$ and (b) $\exists \x \in S_t$ so that
    \begin{align} \label{eq:reach}
        f(\x) + \epsilon + \hat{u}_{m_i, t}(\x,\x')\cdot d(\x,\x) \leq h.
    \end{align}

    Now, assume that contrary to \eqref{eq:lem_sui7}, $S_{t+T_t} = S_t$. This implies that $\x' \in V(S_{t+T_t}) \backslash S_{t+T_t}$ and $\x \in S_{t+T_t}$. As a result, with probability at least $1-\delta_f$,
    \begin{align*}
        &\ell_{f, t+T_t}(\x)+ \hat{u}_{m_i,t+T_t}(\x,\x')\cdot d(\x,\x') \\
        &\leq f(\x)+ \hat{u}_{m_i,t+T_t}(\x,\x')\cdot d(\x,\x')  \tag*{by Lemma \ref{lemma:ucb}} \\
        &\leq f(\x)+ \hat{u}_{m_i,t}(\x,\x')\cdot d(\x,\x')  \tag*{by \eqref{eq:u_hat}}\\
        &\leq  f(\x) + \epsilon + \hat{u}_{m_i, t}(\x,\x')\cdot d(\x,\x') \leq h \tag*{by \eqref{eq:reach}.}
    \end{align*}

    Therefore, $g_{t+T_t}(\x) > 0$ and $\x \in G_{t+T_t}$. Since we assumed that $S_{t+T_t} = S_t$ with $\x \in G_{t+T_t}$, we have
    \begin{align*}
        &u_{f,t+T_t}(\x) + \hat{u}_{m_i,t+T_t}(\x,\x') \cdot d(\x,\x') \\
        &\leq u_{f,t+T_t}(\x) + \hat{u}_{m_i,t}(\x,\x') \cdot d(\x,\x') \tag*{by \eqref{eq:u_hat}}\\
        &\leq u_{f,t+T_t}(\x) - f(\x) -\epsilon + h \tag*{by \eqref{eq:reach}}\\
        &\leq w_{t+T_t}(\x) - \epsilon + h \tag*{by Lemma \ref{lemma:ucb}}\\
        &\leq h \tag*{by Lemma \ref{lemma:sui_cor2}.}
    \end{align*}
    Eq. \eqref{eq:S_t} implies $\x' \in S_{t+T_t}$. This contradicts our assumption that $\x' \in V(S) \backslash S_{t+T_t}$. Therefore, $S_{t+T_t} \supsetneq S_t$.
\end{proof}

\begin{lemma} \label{lemma:sui_lem8}
    For any $t\geq 1$, if $S_{t+T_t} =S_t$, then with probability at least $1-\delta_f$,
    \begin{align*}
        f(\hat{\x}_{t+T_t}) \leq \min_{\x \in \bar{R}_\epsilon (S_0)}f(\x)+\epsilon.
    \end{align*}
\end{lemma}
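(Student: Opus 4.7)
My plan is to follow the argument used to establish Lemma 8 in \cite{sui2015safe}, substituting our nested slope upper bound $\hat{u}_{m_i,t}$ for the Lipschitz constant $L$ wherever it appears in that proof. All statements below are made on the probability-at-least-$(1-\delta_f)$ event on which Lemma \ref{lemma:ucb} for $f$ holds at every iteration.

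First, I would invoke the contrapositive of Lemma \ref{lemma:sui_lem7}: the hypothesis $S_{t+T_t} = S_t$ forces $\bar{R}_{\epsilon}(S_0) \subseteq S_{t+T_t}$. Letting $\x^*_{\epsilon}$ be a minimizer of $f$ over $\bar{R}_{\epsilon}(S_0)$, we therefore have $\x^*_{\epsilon} \in S_{t+T_t}$, and the defining property of $\hat{\x}_{t+T_t}$ yields $u_{f,t+T_t}(\hat{\x}_{t+T_t}) \leq u_{f,t+T_t}(\x^*_{\epsilon})$. The target bound then comes from the chain
\begin{align*}
f(\hat{\x}_{t+T_t}) &\leq u_{f,t+T_t}(\hat{\x}_{t+T_t}) \leq u_{f,t+T_t}(\x^*_{\epsilon}) \\
&\leq \ell_{f,t+T_t}(\x^*_{\epsilon}) + w_{f,t+T_t}(\x^*_{\epsilon}) \leq f(\x^*_{\epsilon}) + \epsilon,
\end{align*}
whose outer inequalities use Lemma \ref{lemma:ucb}, whose third step is the definition of the width, and whose last step requires the width bound $w_{f,t+T_t}(\x^*_{\epsilon}) \leq \epsilon$.

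To secure that width bound through Lemma \ref{lemma:sui_cor2}, I would argue $\x^*_{\epsilon} \in M_{t+T_t}$, i.e., $\ell_{f,t+T_t}(\x^*_{\epsilon}) \leq u_{f,t+T_t}(\hat{\x}_{t+T_t})$. By Lemma \ref{lemma:ucb} applied at $\x^*_{\epsilon}$ and $\hat{\x}_{t+T_t}$, this reduces to showing $f(\x^*_{\epsilon}) \leq f(\hat{\x}_{t+T_t})$. Because $\x^*_{\epsilon}$ minimizes $f$ over $\bar{R}_{\epsilon}(S_0)$, it would be enough to show that $\hat{\x}_{t+T_t}$ belongs to $\bar{R}_{\epsilon}(S_0)$.

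I expect this last inclusion to be the main obstacle. The safe-set update \eqref{eq:S_t} uses $u_{f,t}$ in place of $f$ but carries no explicit $\epsilon$-slack, so on the good event one only obtains $S_{t+T_t} \subseteq \bar{R}_0(S_0)$ rather than $\subseteq \bar{R}_{\epsilon}(S_0)$. Closing this gap will require exploiting the stalled-set hypothesis $S_{t+T_t} = S_t$ together with the $\epsilon$-slack that Lemma \ref{lemma:sui_cor2} already delivers on $M_{t+T_t}$: intuitively, once the safe set has stopped growing, the candidate $\hat{\x}_{t+T_t}$ has been resolved finely enough that its upper bound lies within $\epsilon$ of its true value, which should be enough to place it inside $\bar{R}_{\epsilon}(S_0)$. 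Making this rigorous mirrors the corresponding bookkeeping in \cite{sui2015safe} but must be redone with the GP-based slope bound $\hat{u}_{m_i,t}$ in place of a known Lipschitz constant.
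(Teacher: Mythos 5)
Your opening moves match the paper: the contrapositive of Lemma \ref{lemma:sui_lem7} giving $\bar{R}_{\epsilon}(S_0) \subseteq S_{t+T_t}$, and the chain $f(\hat{\x}) \leq u_{f}(\hat{\x}) \leq u_{f}(\cdot) \leq \ell_{f}(\cdot) + w(\cdot) \leq f(\cdot)+\epsilon$ driven by Lemma \ref{lemma:ucb} and the width bound of Lemma \ref{lemma:sui_cor2}. But your proof is not complete, and the obstacle you flag at the end is a genuine gap created by your choice of comparison point. By anchoring the argument at $\x^*_\epsilon := \arg\min_{\x\in\bar{R}_\epsilon(S_0)} f(\x)$, you are forced to show $f(\x^*_\epsilon)\leq f(\hat{\x}_{t+T_t})$, hence that $\hat{\x}_{t+T_t}\in\bar{R}_\epsilon(S_0)$. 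That inclusion is not available: the safe set is only contained in $\bar{R}_0(S_0)$, and since $\bar{R}_\epsilon(S_0)\subseteq\bar{R}_0(S_0)$ for $\epsilon>0$, the minimizer of $u_{f,t+T_t}$ over $S_{t+T_t}$ can perfectly well sit in $\bar{R}_0(S_0)\setminus\bar{R}_\epsilon(S_0)$. Your proposed repair (extracting the inclusion from the stalled-set hypothesis plus the width bound) is left as a sketch, and it is not the route the paper takes; as written, the proof does not close.

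The paper sidesteps the obstacle by comparing against the minimizer of $f$ over the \emph{safe set} instead: let $\x^* := \arg\min_{\x\in S_{t+T_t}} f(\x)$. Then membership in $M_{t+T_t}$ is immediate on the UCB event, since $\ell_{f,t+T_t}(\x^*) \leq f(\x^*) \leq f(\hat{\x}_{t+T_t}) \leq u_{f,t+T_t}(\hat{\x}_{t+T_t}) = \min_{\x'\in S_{t+T_t}} u_{f,t+T_t}(\x')$, so Lemma \ref{lemma:sui_cor2} gives $w_{t+T_t}(\x^*)\leq\epsilon$ and your chain yields $f(\hat{\x}_{t+T_t}) \leq f(\x^*)+\epsilon$. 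Only at the very end is the containment $\bar{R}_\epsilon(S_0)\subseteq S_{t+T_t}$ invoked, and in the favorable direction: a larger set has a smaller minimum, so $f(\x^*) = \min_{\x\in S_{t+T_t}} f(\x) \leq \min_{\x\in\bar{R}_\epsilon(S_0)} f(\x)$, which finishes the lemma without ever needing $\hat{\x}_{t+T_t}$ (or $\x^*$) to lie in $\bar{R}_\epsilon(S_0)$. Reordering your argument this way removes the unresolved step entirely.
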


\begin{proof}
    By solving a minimization rather than a maximization, the first part of the proof of Lemma 8 in \cite{sui2015safe} shows that $f(\hat{\x}_{t+T_t}) \leq f(\x^*) + \epsilon$, where $\x^*:=\arg\max_{\x\in S_{t+T_t}}f(\x)$.
    Then, since $S_{t+T_t} = S_t$, Lemma \ref{lemma:sui_lem7} implies that $\bar{R}_{\epsilon}(S_0) \subseteq S_t = S_{t+T_t}$. Therefore,
    \begin{align*}
        \min_{\x\in\bar{R}_\epsilon(S_0)} f(\x) +\epsilon &\geq \min_{\x\in S_{t+T_t}}f(\x) + \epsilon\\
        &= f(\x^*) + \epsilon \geq f(\hat{\x}_{t+T_t}).
    \end{align*}
\end{proof}

\begin{corollary} \label{cor:sui_cor3}
    For any $t\geq 1$, if $S_{t+T_t}=S_t$, then with probability at least $1-\delta_f$,
    \begin{align*}
        \forall t' \geq 0, f(\hat{\x}_{t+T_t+t'})\leq \min_{\x \in \bar{R}_{\epsilon} (S_0)}f(\x)+\epsilon. \tag*{\qedsymbol}
    \end{align*}

    Similar to the proof of Corollary 3 in \cite{sui2015safe}, this directly follows from Lemma \ref{lemma:sui_lem8}.
\end{corollary}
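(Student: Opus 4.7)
The plan is to extend the argument of Lemma~\ref{lemma:sui_lem8} from the single time $t+T_t$ to every subsequent iteration $t+T_t+t'$ with $t' \geq 0$. The core fact that makes this extension possible is the hypothesis $S_{t+T_t} = S_t$, which through the contrapositive of Lemma~\ref{lemma:sui_lem7} forces $\bar{R}_{\epsilon}(S_0) \subseteq S_{t+T_t}$ with probability at least $1-\delta_f$. Combined with the monotonicity of the safe sets (Lemma~\ref{lemma:sui_lem2}(i)) and the fact that \safeslope\ never explores outside $\bar{R}_{\epsilon}(S_0)$ (noted before the statement of Theorem~\ref{thm:ss}), the safe set coincides with $\bar{R}_{\epsilon}(S_0)$ at every time $t+T_t+t'$, so the true minimizer $\x^* := \arg\min_{\x \in \bar{R}_{\epsilon}(S_0)} f(\x)$ competes in the UCB minimization that defines $\hat{\x}_{t+T_t+t'}$.

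The bound is then a short chain of inequalities. Using Lemma~\ref{lemma:ucb}, the definition of $\hat{\x}$ as the UCB-minimizer, and nestedness of the confidence intervals,
\[
f(\hat{\x}_{t+T_t+t'}) \leq u_{f,t+T_t+t'}(\hat{\x}_{t+T_t+t'}) \leq u_{f,t+T_t+t'}(\x^*) \leq u_{f,t+T_t}(\x^*).
\]
The remaining step is to upper-bound $u_{f,t+T_t}(\x^*)$ by $f(\x^*) + \epsilon$. By Lemma~\ref{lemma:ucb}, $u_{f,t+T_t}(\x^*) - f(\x^*) \leq w_{t+T_t}(\x^*)$, so the task reduces to establishing $w_{t+T_t}(\x^*) \leq \epsilon$, which I would obtain from Lemma~\ref{lemma:sui_cor2} provided $\x^* \in M_{t+T_t}$.

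The main obstacle is justifying the membership $\x^* \in M_{t+T_t}$, i.e., $\ell_{f,t+T_t}(\x^*) \leq \min_{\x' \in S_{t+T_t}} u_{f,t+T_t}(\x') = u_{f,t+T_t}(\hat{\x}_{t+T_t})$. I would chain $\ell_{f,t+T_t}(\x^*) \leq f(\x^*) \leq f(\hat{\x}_{t+T_t}) \leq u_{f,t+T_t}(\hat{\x}_{t+T_t})$: the outer two inequalities are the LCB and UCB from Lemma~\ref{lemma:ucb}, while the middle inequality uses $\hat{\x}_{t+T_t} \in S_{t+T_t} = \bar{R}_{\epsilon}(S_0)$ together with the minimizing property of $\x^*$ over $\bar{R}_{\epsilon}(S_0)$. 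This is the only step where one must carefully track which probabilistic events of probability at least $1-\delta_f$ are in force simultaneously; the remainder is set-theoretic manipulation and direct appeals to the preceding lemmas.
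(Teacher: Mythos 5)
Your overall chain of inequalities has the right shape, and the ingredients you cite (the contrapositive of Lemma~\ref{lemma:sui_lem7}, monotonicity of $S_t$ from Lemma~\ref{lemma:sui_lem2}, nestedness of the confidence bounds, and the width bound of Lemma~\ref{lemma:sui_cor2}) are the same ones the paper's argument rests on. However, there is a genuine gap where you establish $\x^* \in M_{t+T_t}$. You take $\x^* := \arg\min_{\x\in\bar{R}_\epsilon(S_0)} f(\x)$ and justify the middle inequality $f(\x^*)\le f(\hat{\x}_{t+T_t})$ by asserting that the safe set \emph{coincides} with $\bar{R}_\epsilon(S_0)$. That equality is not available: the hypotheses give only the one-sided containment $\bar{R}_\epsilon(S_0)\subseteq S_{t+T_t}\subseteq S_{t+T_t+t'}$, while in the other direction the appendix only proves $S_t\subseteq\bar{R}_0(S_0)$ (inside the proof of Lemma~\ref{lemma:sui_lem10}); the remark before Theorem~\ref{thm:ss} does not yield $S_t\subseteq\bar{R}_\epsilon(S_0)$, and in general that inclusion is false because the expansion rule \eqref{eq:S_t} certifies safety without the $\epsilon$ margin that defines $R_{\epsilon,\hat{u}_t}$. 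Consequently $\hat{\x}_{t+T_t}$ may lie in $S_{t+T_t}\setminus\bar{R}_\epsilon(S_0)$ at a point where $f$ is smaller than $f(\x^*)$, and your chain $\ell_{f,t+T_t}(\x^*)\le f(\x^*)\le f(\hat{\x}_{t+T_t})$ can break; hence membership of $\x^*$ in $M_{t+T_t}$, and with it the appeal to Lemma~\ref{lemma:sui_cor2}, is unsupported.

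The repair is exactly what the paper's Lemma~\ref{lemma:sui_lem8} does: run the same argument with the safe-set minimizer $\x^\dagger:=\arg\min_{\x\in S_{t+T_t}}f(\x)$ as the comparison point. Then $\ell_{f,t+T_t}(\x^\dagger)\le f(\x^\dagger)\le f(\hat{\x}_{t+T_t})\le u_{f,t+T_t}(\hat{\x}_{t+T_t})$ is valid (the middle inequality now compares two points of the same set $S_{t+T_t}$), Lemma~\ref{lemma:sui_cor2} gives $w_{t+T_t}(\x^\dagger)\le\epsilon$, and your monotonicity chain yields $f(\hat{\x}_{t+T_t+t'})\le u_{f,t+T_t+t'}(\x^\dagger)\le u_{f,t+T_t}(\x^\dagger)\le f(\x^\dagger)+\epsilon\le f^*_\epsilon+\epsilon$, the last step using $\bar{R}_\epsilon(S_0)\subseteq S_{t+T_t}$ so that $f(\x^\dagger)\le\min_{\x\in\bar{R}_\epsilon(S_0)}f(\x)$. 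Note that the paper's own proof is simply this observation packaged as a citation: it invokes Lemma~\ref{lemma:sui_lem8}, whose proof already handles the comparison-point issue, and propagates to all $t'\ge0$ via the nested upper bounds and $S_{t+T_t}\subseteq S_{t+T_t+t'}$; with the comparison point corrected, your write-up is essentially a re-derivation of Lemma~\ref{lemma:sui_lem8} inline rather than an appeal to it.
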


Having analyzed the evolution of the $S_t$, we now bound the time it takes to achieve the optimization goal.
\begin{lemma} \label{lemma:sui_lem10}
    Let $t^*$ be the smallest integer resulting in $t^* \geq |\bar{R}_{0}(S_0)|T_{t^*}$. Then, there exists a $t_0 \leq t^*$ such that $S_{t_0+T_{t_0}} = S_{t_0}$. \hfill \qedsymbol
\end{lemma}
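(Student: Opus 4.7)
The plan is to argue by contradiction: assume that $S_{t_0+T_{t_0}}\supsetneq S_{t_0}$ strictly for every $t_0\le t^*$, and derive that the safe set grows beyond what the reachability closure can contain. The contrapositive of Lemma~\ref{lemma:sui_lem7} tells us that once $S_{t+T_t}=S_t$, we have $\bar{R}_\epsilon(S_0)\subseteq S_t$, so if equality never holds before $t^*$ the sets must keep acquiring new points.

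Concretely, I would first establish monotonicity of the waiting time: $T_t$ is non-decreasing in $t$, because $\beta_{f,t}$ and $\gamma_t$ are both non-decreasing in $t$, so the defining inequality $T_t/(\beta_{f,t+T_t}\gamma_{t+T_t})\geq C_1/\epsilon^2$ requires $T_{t+1}\geq T_t$. This lets us upper bound $T_{t_k}\le T_{t^*}$ for every $t_k\le t^*$ in what follows. Next, build the chain of times $t_1:=0$, $t_{k+1}:=t_k+T_{t_k}$. By the standing assumption each $S_{t_{k+1}}\supsetneq S_{t_k}$, so $|S_{t_k}|\geq |S_0|+(k-1)$. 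The increments satisfy $t_{k+1}-t_k=T_{t_k}\leq T_{t^*}$, so $t_k\le (k-1)T_{t^*}$; consequently the chain can be continued as long as $(k-1)T_{t^*}\le t^*$, which, by the defining property $t^*\ge|\bar{R}_0(S_0)|T_{t^*}$, yields at least $k=|\bar{R}_0(S_0)|+1$ indices with $t_k\le t^*$.

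The final step is to observe that every point ever added to $S_t$ lies in $\bar{R}_0(S_0)$. This follows by induction on $t$: each new $\x'$ enters through the rule in \eqref{eq:S_t} for some $\x\in S_{t-1}\subseteq\bar{R}_0(S_0)$ with $u_{f,t}(\x)+\hat{u}_{m_i,t}(\x,\x')d(\x,\x')\le h$, and the UCB bounds of Lemmas~\ref{lemma:ucb} and~\ref{lemma:ucb_m} (which hold jointly with probability at least $1-\delta_f-\delta_m$) imply $f(\x)+\hat{u}_{m_i,t}(\x,\x')d(\x,\x')\le h$, so $\x'\in R_{0,\hat{u}_t}(\bar{R}_0(S_0))\subseteq\bar{R}_0(S_0)$. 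Hence $S_{t_{|\bar{R}_0(S_0)|+1}}$ would be a strictly increasing chain of $|\bar{R}_0(S_0)|+1$ subsets of $\bar{R}_0(S_0)$, contradicting $|\bar{R}_0(S_0)|<\infty$. This contradiction forces some $t_0\le t^*$ with $S_{t_0+T_{t_0}}=S_{t_0}$.

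The main obstacle I expect is formally handling the dependence of $T_t$ on the future horizon $t+T_t$ when arguing monotonicity, and making sure the chain-construction respects the condition $t_k\le t^*$ exactly so that the count $|\bar{R}_0(S_0)|+1$ can be squeezed in; the containment $S_t\subseteq \bar{R}_0(S_0)$ is standard from the \safeopt\ template (Lemma~3 of \cite{sui2015safe}) adapted with $\hat{u}_{m_i,t}$ in place of $L$, so that piece should go through cleanly.
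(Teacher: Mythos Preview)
Your proposal is correct and follows essentially the same route as the paper: a contradiction argument that builds a strictly growing chain of safe sets (using monotonicity of $T_t$), combined with the containment $S_t\subseteq\bar{R}_0(S_0)$ proved by induction, to force a cardinality overflow. One small over-strengthening: for the containment step you only need Lemma~\ref{lemma:ucb} (so probability $1-\delta_f$), not Lemma~\ref{lemma:ucb_m}, since the reachability operator $R_{0,\hat{u}_t}$ is itself defined with $\hat{u}_{m_i,t}$ rather than the true slope.
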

The proof of this lemma is similar to the proofs of Lemma 9 and 10 in \cite{sui2015safe}, with the key difference of $R$ depending on the upper bound of $\hat{u}_t$ instead of a global constant $L$. 

\begin{proof}
    The first part of this proof is similar to the proof of Lemma 9 in \cite{sui2015safe}. We first show that for any $t\geq 0$, with probability at least $1-\delta_f$,
    \begin{align} \label{eq:sui_lem9}
        S_t \subseteq \bar{R}_{0}(S_0).
    \end{align}
    To show this, we use a proof by induction. For the base case, at $t=0$, by definition, $S_0 \subseteq \bar{R}_{0}(S_0)$.

    Now, we assume that for some $t\geq 1$, $S_{t-1} \subseteq \bar{R}_{0}(S_0)$. Let $\x \in S_t$. It suffices to show $\x \in \bar{R}_{0}(S_0)$. By the definition of $S_t$, $\exists \x' \in S_{t-1}$ so that with probability at least $1-e^{-\frac{1}{2}\beta_{f,t}}$,
    \begin{align*}
        u_{f,t}(\x')+\hat{u}_{m_i,t}(\x,\x') \cdot d(\x,\x') &\leq h\\
        \implies f(\x')+\hat{u}_{m_i,t}(\x,\x') \cdot d(\x,\x') &\leq h, \tag*{by Lemma \ref{lemma:ucb}.}
    \end{align*}

    Then $\x' \in \bar{R}_{0}(S_0)$ from our assumption. By the definition of $\bar{R}_{0}$, since $\x' \in \bar{R}_{0}(S_0)$, then $\x \in \bar{R}_{0}(S_0)$ and $S_t \subseteq \bar{R}_{0}(S_0)$.

    The remainder of the proof is similar to the steps of the proof of Lemma 10 in \cite{sui2015safe}.

    Contrary to our assertion, assume that for any $t \leq t^*$, $S_t \subsetneq S_{t+T_t}$. By Lemma \ref{lemma:sui_lem2} (\ref{lemma_enum:S}), we know that $S_t \subseteq S_{t+T_t}$ Then, since $T_t$ increases with t,
    \begin{align*}
        S_0 \subsetneq S_{T_0} \subseteq S_{T_t^*} \subsetneq S_{T_t^*+T_{T_t^*}} \subseteq S_{2T_t^*} \subsetneq \dots.
    \end{align*}
    This implies that for any $0 \leq p \leq |\bar{R}_{0}(S_0)|$, $|S_{kT_{t^*}}| > p$ and in particular, for $p^* := |\bar{R}_{0}(S_0)|$, we have
    \begin{align*}
        |S_{k^*T}| > |\bar{R}_{0}(S_0)|.
    \end{align*}
    This contradicts the \eqref{eq:sui_lem9}. Therefore, for the given $t^*$, $S_{t_0+T_{t_0}} = S_{t_0}$.
\end{proof}

\begin{corollary} \label{cor:sui_cor4}
    Let $t^*$ be the smallest integer resulting in $\frac{t^*}{\beta_{f,t^*} \gamma_t^*}\geq \frac{C_1\left(|\Bar{R}_{0}(S_0)|+1 \right)}{\epsilon^2}$. Then, there exists a $t_0 \leq t^*$ so that $S_{t_0+T_{t_0}} = S_{t_0}$.\hfill \qedsymbol

    The proof results directly from Lemmas \ref{lemma:sui_cor2} and \ref{lemma:sui_lem10}.
\end{corollary}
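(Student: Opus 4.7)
The plan is to deduce this corollary directly from Lemmas \ref{lemma:sui_cor2} and \ref{lemma:sui_lem10} by showing that the condition on $t^*$ stated here implies the condition $t^* \geq |\bar{R}_{0}(S_0)| T_{t^*}$ required by Lemma \ref{lemma:sui_lem10}. Once that implication is established, the conclusion ``there exists $t_0 \leq t^*$ with $S_{t_0+T_{t_0}} = S_{t_0}$'' follows immediately, because Lemma \ref{lemma:sui_lem10} gives exactly this statement under that hypothesis, and the conclusion depends only on the value of $t^*$ and not on which sufficient condition was used to obtain it.

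The bridge between the two forms of the condition comes from Lemma \ref{lemma:sui_cor2}, which defines $T_{t^*}$ as the smallest positive integer with $T_{t^*}/(\beta_{f,t^*+T_{t^*}}\gamma_{t^*+T_{t^*}}) \geq C_1/\epsilon^2$. By minimality, $T_{t^*} - 1$ fails this inequality (when $T_{t^*}\ge 2$), which yields the upper bound $T_{t^*} \leq C_1\,\beta_{f,t^*+T_{t^*}}\gamma_{t^*+T_{t^*}}/\epsilon^2 + 1$. Multiplying both sides by $|\bar{R}_0(S_0)|$ and using the monotonicity of $\beta_{f,t}$ and $\gamma_t$ in $t$ (so that $\beta_{f,t^*+T_{t^*}}\gamma_{t^*+T_{t^*}} \ge \beta_{f,t^*}\gamma_{t^*}$ on the same scale that the corollary hypothesis controls), one obtains
\[
|\bar{R}_0(S_0)|\,T_{t^*} \;\le\; \frac{C_1\,|\bar{R}_0(S_0)|\,\beta_{f,t^*}\gamma_{t^*}}{\epsilon^2} + |\bar{R}_0(S_0)|.
\]
The extra additive $|\bar{R}_0(S_0)|$ from integer rounding is precisely what is absorbed by the ``$+1$'' appearing in the hypothesis of the corollary. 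Combining this with $t^* \ge C_1(|\bar{R}_0(S_0)|+1)\beta_{f,t^*}\gamma_{t^*}/\epsilon^2$ yields $t^* \ge |\bar{R}_0(S_0)|\,T_{t^*}$, which is the hypothesis of Lemma \ref{lemma:sui_lem10}, and we invoke that lemma to conclude.

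The only subtle point I expect is handling the implicit nature of $T_{t^*}$ (it depends on $\beta$ and $\gamma$ evaluated at $t^*+T_{t^*}$, not at $t^*$), and making sure the ``$+1$'' on the right-hand side of the corollary's condition is large enough to swallow the integer ceiling in the definition of $T_{t^*}$. This is really a bookkeeping step rather than a conceptual one, so the bulk of the proof can be written as a two-line derivation that rearranges the corollary's condition into the form required by Lemma \ref{lemma:sui_lem10} and then cites that lemma.
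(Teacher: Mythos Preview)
Your overall strategy---show that the corollary's hypothesis implies $t^* \ge |\bar{R}_0(S_0)|\,T_{t^*}$ and then invoke Lemma~\ref{lemma:sui_lem10}---is exactly the reduction the paper intends when it cites Lemmas~\ref{lemma:sui_cor2} and~\ref{lemma:sui_lem10}. The issue is in your execution of the bridge.

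You correctly derive, from the minimality of $T_{t^*}$ in Lemma~\ref{lemma:sui_cor2}, that
\[
T_{t^*} \;\le\; \frac{C_1}{\epsilon^2}\,\beta_{f,\,t^*+T_{t^*}}\,\gamma_{t^*+T_{t^*}} + 1.
\]
You then invoke monotonicity, writing $\beta_{f,\,t^*+T_{t^*}}\gamma_{t^*+T_{t^*}} \ge \beta_{f,\,t^*}\gamma_{t^*}$, and conclude
\[
|\bar{R}_0(S_0)|\,T_{t^*} \;\le\; \frac{C_1\,|\bar{R}_0(S_0)|}{\epsilon^2}\,\beta_{f,\,t^*}\gamma_{t^*} + |\bar{R}_0(S_0)|.
\]
But this inference is invalid: from $T_{t^*}\le A+1$ and $A\ge B$ you cannot deduce $T_{t^*}\le B+1$. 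The monotonicity inequality points the wrong way for the replacement you make. You flagged exactly this as the ``only subtle point,'' namely that $T_{t^*}$ is defined through $\beta$ and $\gamma$ evaluated at $t^*+T_{t^*}$ rather than at $t^*$; however, calling it bookkeeping and then applying monotonicity in the wrong direction does not resolve it. As written, the chain of inequalities does not establish $t^* \ge |\bar{R}_0(S_0)|\,T_{t^*}$, so Lemma~\ref{lemma:sui_lem10} cannot yet be invoked.

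A secondary point: even if the monotonicity step were valid, absorbing the additive $|\bar{R}_0(S_0)|$ into the ``$+1$'' of the hypothesis requires $|\bar{R}_0(S_0)| \le \tfrac{C_1}{\epsilon^2}\beta_{f,t^*}\gamma_{t^*}$, which you have not argued. This is less serious than the monotonicity error, but it is also not ``two-line bookkeeping.''
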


\textit{Proof of Theorem \ref{thm:ss}}:
For the first point of Theorem \ref{thm:ss}, the steps are similar to the proof of Lemma 11 in \cite{sui2015safe}. For the induction step, assume $f(\x)\leq h$ for some $t\geq 1$ and any $\x\in S_{t-1}$. Then, for any $\x \in S_t$, $\exists \x' \in S_{t-1}$ along some axis $i$ so that $h \geq u_{f,t}(\x') + \hat{u}_{m_i,t}(\x',\x) \cdot d(\x',\x)$.

With probability at least $1-e^{-\frac{1}{2}\beta_{f,t}}$,
\begin{align*}
    h&\geq f(\x') +\hat{u}_{m_i,t}(\x',\x) \cdot d(\x',\x) \tag*{by Lemma \ref{lemma:ucb}.}
\end{align*}

With probability at least $1-e^{-\frac{1}{2}\beta_{m,t}}$,
\begin{align*}
    &\geq f(\x') + m(\x',\x) \cdot d(\x',\x), \tag*{by Lemma \ref{lemma:ucb_m}} \\
    &\geq f(\x), \tag*{by the definition of $m$.}
\end{align*}
By applying the union bound across $|\mathcal{X}|$ realizations of $\x$, the resulting inequality holds with probability $1-\delta_f-\delta_m$.

The second point results from Corollaries \ref{cor:sui_cor3} and \ref{cor:sui_cor4}. \hfill \qedsymbol
\end{document}